\theoremstyle{definition}
\newtheorem{problem}{Problem}
\newtheorem{claim}{Claim}
\newtheorem{theorem}{Theorem}[]
\newtheorem{lemma}[theorem]{Lemma}
\newtheorem{corollary}[theorem]{Corollary}
\newcommand{\ba}{\[\begin{aligned}}
\newcommand{\ea}{\end{aligned}\]}
\newcommand{\bi}{\begin{itemize}}
\newcommand{\ei}{\end{itemize}}
\newcommand{\bb}{\backslash}
\newcommand{\M}{\mathcal{M}}
\newcommand{\N}{\mathbb{N}}
\newcommand{\R}{\mathbb{R}}
\begin{document}

\title{Two-sided popular matchings in bipartite graphs \\ with forbidden/forced elements and weights}

\author{Yuri Faenza, Vladlena Powers, and Xingyu Zhang\thanks{email: yf2414, vp2342, xz2464@columbia.edu.}\\ IEOR, Columbia University}

\date{}

\maketitle

\begin{abstract} 	Two-sided popular matchings in bipartite graphs are a well-known generalization of stable matchings in the marriage setting, and they are especially relevant when preference lists are incomplete. In this case, the cardinality of a stable matching can be as small as half the size of a maximum matching. Popular matchings allow for assignments of larger size while still guaranteeing a certain fairness condition. In fact, stable matchings are popular matchings of minimum size, and a maximum size popular matching can be as large as twice the size of a(ny) stable matching in a given instance. The structure of popular matchings seems to be more complex -- and currently less understood -- than that of stable matchings. 
	
In this paper, we focus on three optimization problems related to popular matchings.  First, we give a granular analysis of the complexity of \emph{popular matching with forbidden and forced elements} problems, thus complementing results from~\cite{The popular edge problem}. In particular, we show that deciding whether there exists a popular matching with (resp. without) two given edges is NP-Hard. This implies that finding a popular matching of maximum (resp. minimum) weight is NP-Hard and, even if all weights are 
nonnegative, inapproximable up to a factor $1/2$ (resp. up to any factor), unless P=NP. A decomposition theorem from~\cite{The popular edge problem} can be employed to give a $1/2$ approximation to the maximum weighted popular matching problem with nonnegative weights, thus completely settling the complexity of those problems. Those problems were posed as open questions by a number of papers in the area (see e.g.~\cite{Cseh on popular matchings, The popular edge problem, Popularity mixed matchings and self-duality, Popular half-integral matchings, Manlove}).


	\end{abstract}	

\section{Introduction}

Matching problems lie at the intersection of different areas of mathematics, computer science, and economics, and at the core of each of those. In discrete mathematics, they have been studied at least since Petersen (see e.g.~the historical survey in~\cite{Lov Plum}); in discrete optimization, Edmonds' maximum weighted matching algorithm is one the earliest and most remarkable example of non-trivial polynomial-time algorithm; in game theory, network bargaining problems models are tightly connected to maximum matchings and their linear programming formulations~\cite{KleTa}; in two-sided markets, Gale-Shapley's model~\cite{GaSha} has been widely used and generalized to assign e.g.~students to schools and interns to hospitals. 

Gale-Shapley's model assumes that the input graph is bipartite, and that each node is endowed with a strict preference list over the set of its neighbors. The goal is to find a matching that respects a certain concept of fairness called \emph{stability}. This model has been generalized in many ways. On one hand, one can change the structure of the \emph{input}, assuming e.g.~the presence of ties in the preference lists, or of quotas in one of the two sides of the bipartition, or that the preference pattern is given by more complex choice functions (see e.g.~\cite{Manlove} for a collection of these extensions). On the other hand, one can change the requirements of the \emph{output}, i.e. ask that the matching we produce satisfies properties other than stability. For instance, relaxing the stability condition to \emph{popularity} allows to overcome one of the main drawbacks of stable matchings: the fact that two individuals (a \emph{blocking pair}) can prevent the matching from being much larger. Roughly speaking, a matching $M$ is \emph{popular} if, for every other matching $M'$, the number of nodes that prefer $M$ to $M'$ are at least as many as the number of nodes that prefer $M'$ to $M$ (this model is called \emph{two-sided}; formal definitions are given in Section~\ref{sec:two}). One can show that stable matchings are popular matchings of minimum size, and the size of a popular matching can be twice as much as the size of a stable matching (recall that all stable matchings are maximal, and a maximal matching has size at least half the size of a maximum matching). Hence, popularity allows for matchings of larger size while still guaranteeing a certain fairness condition.

Surprisingly, most of what is known about popular matchings derives in a way or another from stable matchings. For instance, Kavitha~\cite{A size-popularity tradeoff in the stable marriage problem}  showed that a maximum size popular matching can be found efficiently by a combination of repeated applications of the \emph{Gale-Shapley's} algorithm and \emph{promotion}. Cseh and Kavitha~\cite{The popular edge problem} showed that a pair of nodes are matched together in at least a popular matching if and only if they are matched together either in some stable matching, or in some \emph{dominant} matching, with the latter being (certain) popular matchings of maximum size, and the image of stable matchings in another graph under a linear map. This fact implies that the \emph{popular edge problem} can be solved in polynomial time (see Section~\ref{sec:def} for definitions). In~\cite{The popular edge problem}, it is asked whether their results can be extended to find a popular matching that contains two given edges. Other open questions in the area involve finding a popular matching of maximum (a generalization of the popular edge problem) or minimum (and its special case, the \emph{unpopular edge problem}) weights, in particular when weights are nonnegative, see e.g.~\cite{The popular edge problem, Cseh on popular matchings, Popularity mixed matchings and self-duality, Popular half-integral matchings}. 

\smallskip

\noindent {\bf Our contribution.} In this paper, we settle the complexity of all those problems by showing that it is NP-Complete to decide whether there exists a popular matching that (does not) contain two given edges. More generally, we give a very granular analysis of the complexity of the \emph{popular matching with forbidden and forced elements} problem, i.e. deciding whether a popular matching exists when we force it (not) to have certain edges or vertices. 

From our reduction, it follows that the problem of finding a popular matching of \emph{minimum} weight when weights on edges are nonnegative cannot be approximated up to any factor (one can indeed define the objective function so that it is NP-Complete to decide if the optimum is $0$). If instead, still under the nonnegativity assumption on the weights, we want to find a popular matching of \emph{maximum} weight, the problem becomes inapproximable to a factor better than $1/2$. It follows from a decomposition theorem for popular matchings given in~\cite{The popular edge problem} that a $1/2$ approximation can be achieved in polynomial time by taking the matching of maximum weight from the set given by dominant and stable. This shows that our result is tight. Our findings suggest that stable matchings and their closely connected counterpart dominant matchings seem to be the only tractable subclasses of popular matchings.

\smallskip

\noindent {\bf Organization of the paper.} The paper is organized as follows. In Section~\ref{sec:two}, we formally define the main problems under consideration and state our results. In Section~\ref{sec:def}, we introduce suitable notation and recall results that will be used in proofs. The main technical contribution is given in Section~\ref{sec:reduction}, where a reduction from monotone $3$-SAT to the existence of certain popular matchings is given. In Section~\ref{sec:main-results}, we combine basic facts showed in Section~\ref{sec:def} and the reduction from Section~\ref{sec:reduction} to investigate the complexity of popular matching with forbidden and forced elements problems. We conclude with some further NP-Hardness results in Section~\ref{sec:last}.

\section{Problems and statement of main results}\label{sec:two}

Let  $G=(A\cup B, E)$ be a, simple connected bipartite graph. Since all graphs under consideration are simple and non-directed, we denote the edge between nodes $u$ and $v$ as $uv$. 
For each vertex $v \in V$, let $\Gamma(v)$ denote the neighbors of $v$ in $G$ and let $<_v$ be a strict total order over $\Gamma(v)$. If $u <_v w$, we say that $v$ \emph{prefers} $u$ to $w$. We write $u \leq _v w$ if $w=u$ or $u <_v w$, and  we say that $v$ \emph{weakly prefers} $u$ to $w$.  The pair $(G, <)$ is called a \emph{preference system}, where $<$ is the collection of strict total orders $<_v$ for $v \in V$. 

A \emph{matching} of $(G,<)$ is a subset $M$ of the edges of $G$ such that, for each node of $G$, there is at most an edge of $G$ incident to it. If there exists $e \in M$ such that $v \in e$, we say that $v$ is \emph{matched} or \emph{covered} by $M$. We say that $v$ is \emph{unmatched}, or \emph{exposed by $M$}, or \emph{$M$-exposed} otherwise. We denote by $V(M)$ the set of nodes matched by $M$. 
A vertex $v \in A\cup B$ is said to \emph{prefer} matching $M$ to matching $M'$ if either $v$ is matched in $M$ and unmatched in $M'$, or $v$ is matched in both and $v$ prefers its partner in matching $M$ to its partner in matching $M'$. For matchings $M$ and $M'$, let $\phi(M,M')$ be the number of vertices that prefer $M$ to $M'$. If $\phi(M',M) > \phi(M,M')$, then we say $M'$ is \emph{more popular than} $M$. We say a matching $M$ is \emph{popular} if there is no matching that is more popular than $M$; in other words, $\phi(M,M') \geq \phi(M',M)$ for all matchings $M'$ in $(G,<)$.


In this paper, we mostly consider the following three problems.

\begin{problem}[Popular matching with forbidden and forced elements problem -- \texttt{pmffe}]\label{pb:main problem} 
	{\bf Given:} A preference system $(G,<)$, $U^{in},U^{out} \subset V(G)$, $F^{in},F^{out} \subset E(G)$, such that $U^{in}\cap U^{out}=\emptyset $, $F^{in}\cap F^{out}=\emptyset$, and if $uv \in F^{in}\cup F^{out}$ then $u,v \notin U^{in}\cup U^{out}$. \textbf{Find}: A popular matching of $(G,<)$ that contains all vertices of $U^{in}$, all edges of $F^{in}$, no vertices of $U^{out}$ and no edges of $F^{out}$, or conclude that such matching does not exist.
\end{problem}

\begin{problem}[Maximum weight popular matching with nonnegative weights -- \texttt{mwp}]\label{pb:mwp} 
	{\bf Given:} A preference system $(G,<)$, weights $w:E(G)\rightarrow \R_+$. \textbf{Find}: A popular matching $M$ of $(G,<)$ that maximizes $w(M)$.
\end{problem}

\begin{problem}[Minimum weight popular matching with nonnegative weights -- \texttt{miwp}]\label{pb:mwp} 
	{\bf Given:} A preference system $(G,<)$, weights $w:E(G)\rightarrow \R_+$. \textbf{Find}: A popular matching $M$ of $(G,<)$ that minimizes $w(M)$.
\end{problem}

Our first result is a complete characterization of the complexity of of \texttt{pmffe} as a function of the cardinality of sets $U^{in}$, $U^{out}$, $F^{in}$, $F^{out}$.

\begin{theorem}
	\label{thr: 3SAT-popular}
	If $U^{out}=F^{in}=F^{out} = \emptyset$ or $U^{in}=F^{in}=F^{out} = \emptyset$, or ${|U^{in}\cup U^{out}\cup F^{in}\cup F^{out}|\leq 1}$, then \texttt{pmffe} can be solved in polynomial time. Otherwise, for each other set of fixed values for $|U^{in}|$,$|U^{out}|$,$|F^{in}|$,$|F^{out}|$, \texttt{pmffe}  is NP-Complete.
\end{theorem}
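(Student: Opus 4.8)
The plan is to prove the two halves of the dichotomy by different means: the polynomial cases reduce to the popular edge problem of~\cite{The popular edge problem} and to the structural facts recalled in Section~\ref{sec:def} about which vertices popular matchings must cover or may expose, while every remaining choice of cardinalities is handled by the reduction of Section~\ref{sec:reduction} together with a few gadgets. Membership in NP is immediate for all variants of \texttt{pmffe}, since popularity of a given matching is checkable in polynomial time and the $U^{in},U^{out},F^{in},F^{out}$ conditions trivially so; hence only NP-hardness needs to be argued.

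For the tractable side I would separate three cases. If $|U^{in}\cup U^{out}\cup F^{in}\cup F^{out}|\le 1$: the empty instance is solved by any stable matching; a single forced edge is exactly the popular edge problem; a single forced vertex $v$ is feasible iff some edge at $v$ is a popular edge, decided by running the popular edge algorithm on the edges incident to $v$; and a single forbidden edge (resp.\ vertex) is feasible unless every popular matching uses it (resp.\ covers it), which is testable in polynomial time via the characterizations of Section~\ref{sec:def}. If only $U^{in}$ is allowed to be nonempty, feasibility is equivalent to each vertex of $U^{in}$ being covered by some popular matching (decided as above), and when this holds a single popular matching covering all such vertices can be found in polynomial time using Section~\ref{sec:def} (e.g.\ a dominant matching covers every poppable vertex). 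Symmetrically, if only $U^{out}$ is allowed to be nonempty, feasibility is equivalent to $U^{out}$ being disjoint from the fixed, polynomially computable set of vertices covered by every popular matching, and then a stable matching is a solution.

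For NP-hardness I would start from the reduction of Section~\ref{sec:reduction}: from a monotone $3$-SAT formula $\varphi$ it produces a preference system in which a designated edge (equivalently, a designated vertex) lies in some popular matching iff $\varphi$ is satisfiable, and, reading the clause gadgets with opposite polarity, a designated edge/vertex is avoided by a popular matching iff $\varphi$ is satisfiable. This yields NP-hardness for a small number of base tuples, including $(|F^{in}|,|F^{out}|,|U^{in}|,|U^{out}|)=(2,0,0,0)$ — which answers the question of~\cite{The popular edge problem} on popular matchings containing two prescribed edges — and $(0,2,0,0)$ and $(1,1,0,0)$, together with their mixed edge/vertex analogues. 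To reach an arbitrary tuple not on the tractable list, I would attach two kinds of gadget, always pendant from a fixed vertex so as to preserve connectivity: a two-edge pendant path carrying an edge forced into every popular matching, which increments any one of the four cardinalities without affecting feasibility; and a small pendant gadget built on a $4$-cycle with two designated perfect sub-matchings, which lets me trade a forced (resp.\ forbidden) edge for a forced (resp.\ forbidden) vertex, or switch the polarity of a constraint, again without creating new popular matchings. Padding up to the prescribed counts then gives NP-completeness of \texttt{pmffe} for every bad tuple.

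The main obstacle is expected to be the monotone $3$-SAT reduction of Section~\ref{sec:reduction} itself: designing variable gadgets with two kinds of local popular matchings, clause gadgets that admit many local popular matchings but are globally consistent only under a satisfying assignment, and a single element that witnesses satisfiability, so that the instance has a popular matching with the prescribed forced/forbidden element exactly when $\varphi$ is satisfiable. A secondary but still delicate point is establishing that the propagation gadgets are \emph{inert} — that on the enlarged instance the popular matchings are precisely the old ones extended by the gadget's forced sub-matching, so that in particular the attaching edges never participate in a popularity-improving exchange — since this is what makes the forced/forbidden counts behave additively.
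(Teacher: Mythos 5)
Your overall architecture (stable/dominant machinery and the popular-edge algorithm for the tractable tuples, a monotone 3-SAT reduction plus padding gadgets for the rest) matches the paper's, but the hardness half has a genuine conceptual gap. You describe the reduction as producing ``a designated edge (equivalently, a designated vertex) [that] lies in some popular matching iff $\varphi$ is satisfiable'' and later ask for ``a single element that witnesses satisfiability.'' No such reduction can exist: every single-element case of \texttt{pmffe} is polynomial-time solvable (the popular edge problem, its forbidden-edge counterpart, and the vertex cases via Lemma~\ref{lem:node-containement}), as you yourself argue in the first half, so a one-element witness of satisfiability would give P$=$NP. The paper's Lemma~\ref{lem: 3SAT-popular} is crucially about a \emph{pair} of simultaneous constraints: $(G(\psi),<)$ has a popular matching avoiding \emph{both} $st$ and $wx$ iff $\psi$ is satisfiable, and all mixed base cases of total size two come from equivalences inside that same instance ($st,wx\notin M \Leftrightarrow tu,xy\in M \Leftrightarrow s\notin V(M),\,y\in V(M)$, etc.), not from separate polarity-switching gadgets. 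Since you defer the construction of the reduction itself as ``the main obstacle,'' the core of the NP-completeness claim is both missing and mis-specified.

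Two further steps would fail or are unsubstantiated as written. First, your padding: a pendant path whose edge lies in \emph{every} popular matching can pad $F^{in}$ and $U^{in}$, but adding such an edge (or its endpoints) to $F^{out}$ or $U^{out}$ does not leave feasibility unchanged --- it makes the instance trivially infeasible. To grow the out-sets you need elements that are never used by the relevant popular matchings; the paper uses the vertices $h(\ell)$ and the evicted edges, which Lemma~\ref{M contains (t,u), apex is matched to its first choice} shows are avoided by every popular matching containing $tu$. Your pendant 4-cycle gadget ``with two designated perfect sub-matchings \dots without creating new popular matchings'' is self-contradictory as stated and is never shown to be inert. Second, on the tractable side, the single forbidden edge case is not ``testable via the characterizations of Section~\ref{sec:def}'': Theorem~\ref{thr:characterize-popular} only tests popularity of a given matching. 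One needs the argument that an edge is avoidable by some popular matching iff it is avoidable by some stable or some dominant matching ($\bar{E_p}=\bar{E_s}\cup\bar{E_d}$), which the paper derives from $E_p=E_s\cup E_d$ of~\cite{The popular edge problem}; this step is asserted but not proved in your proposal.
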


We then investigate consequences of Theorem~\ref{thr: 3SAT-popular} for some optimization problems over the set of popular matchings. It easily follows that the problems of maximizing/minimizing \emph{any} linear function is inapproximable up to any factor, unless P=NP (see Section~\ref{sec:last} for details). Theorem~\ref{thr: 3SAT-popular} gives hardness and tight inapproximability results for \texttt{mwp} and \texttt{miwp}.

\begin{theorem}\label{thr:mwm} \texttt{miwp} and \texttt{mwp} are NP-Hard. Unless P=NP, \texttt{miwp} cannot be approximated in polynomial time up to any factor,  and \texttt{mwp} cannot be approximated in polynomial time better than a factor $\frac{1}{2}$. On the other hand, there is a polynomial-time algorithm that computes a $\frac{1}{2}$-approximation to \texttt{mwp}.
\end{theorem}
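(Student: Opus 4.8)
The plan is to obtain the hardness and inapproximability statements as direct consequences of Theorem~\ref{thr: 3SAT-popular}, by encoding ``forced/forbidden edge'' constraints into a $\{0,1\}$-valued weight function, and to obtain the matching $\frac12$-approximation for \texttt{mwp} from the decomposition theorem for popular matchings of~\cite{The popular edge problem}.

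\noindent\textbf{Hardness of \texttt{mwp}.} I would start from the special case of Theorem~\ref{thr: 3SAT-popular} with $F^{in}=\{e_1,e_2\}$ and $U^{in}=U^{out}=F^{out}=\emptyset$: it is NP-complete to decide whether a given preference system $(G,<)$ admits a popular matching containing two prescribed edges $e_1,e_2$ (one may freely assume $e_1,e_2$ are vertex-disjoint, since otherwise the answer is trivially ``no''). Given such an instance I would set $w(e_1)=w(e_2)=1$ and $w(e)=0$ for every other edge $e$. Since stable matchings are popular, a popular matching always exists, so the \texttt{mwp} optimum is well defined; it equals $2$ exactly on the yes-instances and is at most $1$ on the no-instances, which already yields NP-hardness. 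As every matching has $w$-weight in $\{0,1,2\}$, a polynomial-time $\alpha$-approximation with $\alpha>\frac12$ would, on a yes-instance, return a popular matching of weight $\ge 2\alpha>1$, hence of weight exactly $2$, and, on a no-instance, a popular matching of weight at most $1$; thus it would decide the NP-complete problem above, so no such approximation exists unless P$=$NP.

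\noindent\textbf{Hardness of \texttt{miwp}.} Here I would argue symmetrically, starting from the NP-complete case of Theorem~\ref{thr: 3SAT-popular} with $F^{out}=\{e_1,e_2\}$ and the other three sets empty: it is NP-complete to decide whether $(G,<)$ admits a popular matching that avoids \emph{both} $e_1$ and $e_2$. With the same weights $w(e_1)=w(e_2)=1$ and $w\equiv 0$ elsewhere, the \texttt{miwp} optimum is $0$ precisely on the yes-instances and is at least $1$ on the no-instances, giving NP-hardness; moreover any polynomial-time approximation within a factor $c\ge 1$ would, on a yes-instance, be forced to return a matching of weight $\le c\cdot 0=0$ and, on a no-instance, one of weight $\ge 1$, again deciding the NP-complete problem. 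Hence \texttt{miwp} admits no multiplicative approximation unless P$=$NP.

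\noindent\textbf{A $\frac12$-approximation for \texttt{mwp}.} For the positive part, the plan is to invoke the decomposition theorem of~\cite{The popular edge problem}: every popular matching $N$ of $(G,<)$ satisfies $N\subseteq S\cup D$ for some stable matching $S$ and some dominant matching $D$ of $(G,<)$. Since $w\ge 0$, this gives $w(N)\le w(S)+w(D)\le 2\max\{w(S),w(D)\}$ for every popular matching $N$, in particular for a $w$-maximizer. The algorithm then computes a maximum-weight stable matching of $(G,<)$ (by optimizing $w$ over the stable matching polytope, or via the rotation poset) and a maximum-weight dominant matching, and outputs the heavier of the two; this is a popular matching, since stable and dominant matchings are popular, and by the displayed inequality its weight is at least $\frac12$ of the \texttt{mwp} optimum. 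The step I expect to require the most care is the maximum-weight dominant matching computation: I would use the fact from~\cite{The popular edge problem} that the dominant matchings of $(G,<)$ are the images, under a fixed linear map, of the stable matchings of an auxiliary preference system, and verify that $w$ can be pulled back through that map so that the task reduces to a genuine maximum-weight stable matching problem in the auxiliary instance. Once that reduction is checked, the rest is routine.
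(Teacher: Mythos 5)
Your proposal is correct and takes essentially the same route as the paper: $\{0,1\}$ weights concentrated on two forced (resp.\ forbidden) edges to get NP-hardness and the tight inapproximability bounds, and the stable-plus-dominant decomposition of popular matchings from~\cite{The popular edge problem} to show that the heavier of a maximum-weight stable and a maximum-weight dominant matching is a $\frac12$-approximation for \texttt{mwp}. The only cosmetic difference is that you invoke the abstract two-edge cases of Theorem~\ref{thr: 3SAT-popular}, whereas the paper places the weights directly on the edges $tu,xy$ (resp.\ $st,wx$) of the graph $G(\psi)$ from Lemma~\ref{lem: 3SAT-popular}.
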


\section{Definitions and basic facts from the theory of popular matchings}\label{sec:def}

Given $(G,<)$, we often represent $<$ via a \emph{preference pattern}, i.e., a function $L$ that maps ordered pairs $(u,v)$ with $uv \in E$ to the position of $v$ in $<_u$. If $L(u,v)=1$, we say that $v$ is the \emph{favorite partner} of $u$. An \emph{incomplete preference pattern} $I$ for a bipartite graph $G$ is a map $I: \{(u,v) :  uv \in E(G)\} \rightarrow \N \cup \{\infty, \emptyset\}$, such that $I(u,v)=I(u,x) \text{ for some } u \neq x \neq v \in V$ implies $I(u,v)=I(u,x)=\emptyset$, and $I(u,v)\in \N$ implies $I(u,v)\leq |\Gamma(u)|$. We say that a preference pattern $L$ for a supergraph $G'$ of $G$ \emph{agrees} with an incomplete preference pattern $I$ if, for all $uv \in E(G)$ with $I(u,v)\neq \emptyset$, we have: $L(u,v)=I(u,v)$ whenever $I(u,v) \in \N$ and $L(u,v)=|\Gamma_{G'}(u)|$ whenever $I(u,v)=\infty$  (here $\Gamma_{G'}(u)$ is the neighborhood of $u$ in $G'$).

Fix a matching $M$ of $(G,<)$. An $M$-alternating path (resp. cycle) in $G$ is a path (resp. cycle) whose edges are alternatively in $M$ and in $E(G)\setminus M$. We morover associate labels to edges from $E\setminus M$ as follows: an edge $uv$ is $(-,-)$ if both $u$ and $v$ prefer their respective partners in $M$ to each other; $uv = (+,+)$ if $u$ and $v$ prefer each other to their partners in $M$; $uv = (+,-)$ if $u$ prefers $v$ to its partner in $M$ and $v$ prefers its partner in $M$ to $u$. To denote labels of edge $uv$, sometimes we write \emph{$uv$ is a $(+,+)$ edge}, and similarly for the other cases. If we want to stress that $u$ prefers his current partner to the other endpoint of the edge, we write that e.g.~$uv$ is a $(-,+)$ edge.  The graph $G_M$ is defined as the subgraph of $G$ obtained by deleting edges that are labeled $(-,-)$. Observe that $M$ is also a matching of $G_M$, hence definitions of $M$-alternating path and cycles apply in $G_M$ as well. These definitions can be used to obtain a characterization of popular matchings in terms of forbidden substructures of $G_M$~\cite{Popular matching in the stable marriage problem}. We consider path and cycles as ordered sets of nodes. For a path (resp. cycle) $P$ and an edge $e$, we say that \emph{$P$ contains $e$} if $e=uv$ for two nodes $u,v$ that are consecutive in $P$ (not necessarily in this order).

\begin{theorem}\label{thr:characterize-popular}
	Let $(G,<)$ be a preference system, and $M$ a matching of $G$. Then $M$ is popular if and only if $G_M$ does not contain any of the following:
	\begin{enumerate}[(i)]
		\item an $M$-alternating cycle that contains a $(+,+)$ edge.
		\item an $M$-alternating path that contains two distinct $(+,+)$ edges.
		\item an $M$-alternating path starting from an $M$-exposed node that contains a $(+,+)$ edge. 
	\end{enumerate}
\end{theorem}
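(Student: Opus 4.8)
The plan is to use the standard ``$\pm1$-vote'' reformulation of popularity and to argue componentwise on a symmetric difference. For matchings $M,M'$, let each vertex $v$ cast a vote $\nu_v(M',M)\in\{+1,0,-1\}$, equal to $+1$ if $v$ prefers $M'$ to $M$, to $-1$ if $v$ prefers $M$ to $M'$, and to $0$ otherwise; then $M$ is popular if and only if $\sum_v\nu_v(M',M)\le 0$ for every matching $M'$. Every vertex outside $V(M\triangle M')$ has the same partner in $M$ and in $M'$, hence votes $0$, so $\sum_v\nu_v(M',M)=\sum_C \mathrm{val}(C)$, the sum running over the connected components $C$ of $M\triangle M'$, each of which is an $M$-alternating path or cycle of $G$, and where $\mathrm{val}(C)$ denotes the total vote of the vertices on $C$. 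It thus suffices to control $\mathrm{val}(C)$, and this is exactly where $G_M$ and the edge labels enter.

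First I would prove the elementary identity $\mathrm{val}(P)=2q-r$, valid for every $M$-alternating path $P$ with $P\subseteq G_M$ for which $M\triangle P$ is again a matching; here $q$ is the number of $(+,+)$ edges of $P$ and $r\in\{0,1,2\}$ is the number of endpoints of $P$ whose incident $P$-edge lies in $M$. Indeed, in $M\triangle P$ an internal vertex of $P$ votes exactly the ``sign'' carried at it by its incident non-$M$ edge of $P$; an endpoint whose $P$-edge is non-$M$ is necessarily $M$-exposed (else $M\triangle P$ would give it degree two) and votes $+1$, in agreement with that edge's sign; and an endpoint whose $P$-edge lies in $M$ votes $-1$. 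Regrouping the total over the non-$M$ edges, and using that within $G_M$ a $(+,+)$ edge contributes $+2$ and a mixed $(+,-)/(-,+)$ edge contributes $0$ while no $(-,-)$ edge occurs, gives the formula; the analogous computation for a cycle $\rho\subseteq G_M$ gives $\mathrm{val}(\rho)=2q$. The second ingredient reduces an arbitrary component $C$ of $M\triangle M'$ to such paths: if $C$ uses a $(-,-)$ edge $e$, then $e$ is neither the first nor the last edge of $C$, and deleting $e$ breaks $C$ into $M$-alternating paths with fewer $(-,-)$ edges (opening the cycle into a path, if $C$ was a cycle) without changing the total value, since both endpoints of $e$ keep voting $-1$ and no other vertex changes partner. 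Iterating, $\mathrm{val}(C)=\sum_j\mathrm{val}(P_j)$ with each $P_j$ an $M$-alternating path in $G_M$; moreover an endpoint of some $P_j$ whose $P_j$-edge is non-$M$ must be an original endpoint of $C$, and hence $M$-exposed.

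Granting these, the theorem follows. For the ``only if'' direction, suppose $G_M$ contains one of the configurations. Configuration (i): for the offending cycle $\rho$, $\mathrm{val}(\rho)=2q\ge 2$, so $M\triangle\rho$ is more popular than $M$. Configurations (ii) and (iii): pass to the sub-path $Q$ running between two $(+,+)$ edges (case (ii)), resp.\ from the $M$-exposed endpoint up to and including a $(+,+)$ edge (case (iii)); if an endpoint of $Q$ is $M$-matched, extend $Q$ through its $M$-edge, which keeps $Q\subseteq G_M$ and makes $M\triangle Q$ a matching. Then $q\ge 2$ (case (ii)) resp.\ $q\ge 1$ (case (iii)), while $r\le 2$ resp.\ $r\le 1$ (the $M$-exposed end never contributes to $r$), so $\mathrm{val}(Q)=2q-r\ge 2$ resp.\ $\ge 1$, and $M\triangle Q$ is more popular than $M$. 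For the ``if'' direction, assume none of (i)--(iii) occurs and fix an arbitrary $M'$; by the reduction it suffices to show $\mathrm{val}(P_j)\le 0$ for each $M$-alternating path $P_j\subseteq G_M$ arising from a component of $M\triangle M'$. Absence of (ii) forces $q\le 1$. If $q=0$ then $\mathrm{val}(P_j)=-r\le 0$. If $q=1$, then an endpoint of $P_j$ with a non-$M$ edge would be $M$-exposed, exhibiting configuration (iii) -- impossible -- so both endpoints of $P_j$ carry $M$-edges, $r=2$, and $\mathrm{val}(P_j)=0$. Hence $\sum_v\nu_v(M',M)=\sum_C\mathrm{val}(C)\le 0$ for every $M'$, i.e.\ $M$ is popular.

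I expect the main obstacle to be the bookkeeping around $(-,-)$ edges: one has to verify precisely that they occur only in the interior of the symmetric-difference components, that deleting them preserves the total vote while leaving every other vertex's contribution intact, and -- for the converse -- that no path endpoint created by such a deletion is $M$-exposed, so that the clean identity $\mathrm{val}(P)=2q-r$ on $G_M$ can be applied piece by piece. What remains are short case checks on the two integer parameters $q$ and $r$.
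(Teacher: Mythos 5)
Note first that the paper does not prove this statement at all: Theorem~\ref{thr:characterize-popular} is imported, with a citation, from Huang and Kavitha's work on popular matchings in the stable marriage problem, so there is no in-paper proof to compare against. Your argument is the standard one for this characterization (vote reformulation of popularity, decomposition of $M\triangle M'$ into alternating paths and cycles, the identity $\mathrm{val}(P)=2q-r$ on pieces lying in $G_M$, and splitting components at $(-,-)$ edges while observing that the two endpoints of a deleted $(-,-)$ edge keep voting $-1$), and it is essentially sound; in particular the key observations -- that an endpoint of a symmetric-difference component whose incident edge is a non-$M$ edge must be $M$-exposed, and that endpoints created by deleting $(-,-)$ edges always carry $M$-edges -- are exactly what makes the piecewise identity applicable.

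Two small case omissions should be patched. First, in your treatment of configuration (ii), extending the subpath $Q$ through the $M$-edge of an endpoint can fail to produce a path: if the two endpoints of $Q$ are matched to each other in $M$, the extension closes an $M$-alternating cycle of $G_M$ containing a $(+,+)$ edge; this is not a problem for the theorem (it is exactly configuration (i), so the cycle case already yields non-popularity), but your text asserts the extension ``makes $M\triangle Q$ a matching'' without excluding this degenerate case. Second, in the ``if'' direction you reduce everything to paths $P_j\subseteq G_M$, but a component of $M\triangle M'$ that is a cycle containing no $(-,-)$ edge is never opened and remains a cycle of $G_M$; there you must invoke your cycle identity $\mathrm{val}(\rho)=2q$ together with the absence of configuration (i) to get $\mathrm{val}(\rho)=0$. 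Both fixes are one line each, and with them the proof is complete.
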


If the matching $M$ under consideration is clear, we omit mentioning that labels are wrt $M$ and paths, cycles etc.~occur in the graph $G_M$: we simply say e.g.~that $uv$ is a $(+,+)$ edge and $P$ is an $M$-alternating path with a $(+,+)$ edge. In particular, unless stated otherwise, when talking about $M$-alternating paths or cycles, we always mean in the graph $G_M$ (hence, after $(-,-)$ edges have been removed).

We say a matching $M$ \emph{defeats} a matching $M'$ (and $M'$ is defeated by $M$) if either of these two conditions holds:
\begin{itemize}
	\item[(i)] $M$ is more popular than $M'$, i.e.,  $\phi(M,M') > \phi(M',M)$;
	\item[(ii)]  $\phi(M,M') = \phi(M',M)$ and $|M|>|M'|$.
\end{itemize}

A \emph{dominant} matching is defined to be a matching that is never defeated. A dominant matching is by definition also popular. The following characterization of dominant matchings appeared in~\cite{The popular edge problem}, where it is also shown that a dominant matching always exists.

\begin{theorem}
Let $M$ be a popular matching. $M$ is dominant if and only if there is no $M$-augmenting path in $G_M$.
\end{theorem}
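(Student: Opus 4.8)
The plan is to prove both directions by contraposition, combining the forbidden-substructure characterization of popular matchings (Theorem~\ref{thr:characterize-popular}) with a vote-counting argument on the symmetric difference. Since $M$ is popular, $\phi(M,M')\ge\phi(M',M)$ for every matching $M'$, so a matching can defeat $M$ only through condition (ii); hence $M$ \emph{fails} to be dominant exactly when there is a matching $M'$ with $|M'|>|M|$ and $\phi(M,M')=\phi(M',M)$. For a fixed $M'$, write $D=M\triangle M'$ as a disjoint union of $M$-alternating paths and cycles; for a component $C$ let $\delta(C)$ be the number of vertices of $C$ preferring $M'$ minus the number preferring $M$, so that $\phi(M',M)-\phi(M,M')=\sum_C\delta(C)$, and applying popularity of $M$ to the matching $M\triangle C$ gives $\delta(C)\le 0$ for every $C$. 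The one computation I will need repeatedly: if $R$ is an $M$-augmenting path (both endpoints $M$-exposed) with $t$ non-$M$ edges, then every vertex of $R$ lies on exactly one non-$M$ edge of $R$ and prefers $M\triangle R$ to $M$ iff that edge is labeled ``$+$'' on its side (automatic for the two exposed endpoints); summing over vertices, $\delta(R)=2\sum_{e}(\#\text{ of ``}+\text{'' endpoints of }e)-2t$ where $e$ ranges over the non-$M$ edges of $R$. In particular $\delta(R)=0$ if and only if $R$ contains equally many $(+,+)$ edges as $(-,-)$ edges.

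\emph{Forward direction.} Suppose $G_M$ contains an $M$-augmenting path $P$, and set $M'=M\triangle P$, so $|M'|=|M|+1$. Since $M$ is popular and $P$ starts at an $M$-exposed vertex, Theorem~\ref{thr:characterize-popular}(iii) forbids a $(+,+)$ edge on $P$, and $P$ has no $(-,-)$ edge because $P\subseteq G_M$; hence every non-$M$ edge of $P$ has exactly one ``$+$'' endpoint. By the counting identity, $\delta(P)=0$, i.e.\ $\phi(M,M')=\phi(M',M)$. Then $M'$ defeats $M$ by condition (ii), so $M$ is not dominant.

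\emph{Reverse direction.} Suppose $M$ is popular but not dominant, and pick $M'$ with $|M'|>|M|$ and $\phi(M,M')=\phi(M',M)$. By the preliminary remarks every component of $D=M\triangle M'$ has $\delta(C)=0$, and since $|M'|>|M|$ at least one component $Q$ is an $M$-augmenting path, with $\delta(Q)=0$. By the counting identity $Q$ has the same number $p$ of $(+,+)$ edges as of $(-,-)$ edges. It remains to show $p=0$: then $Q$ has no $(-,-)$ edge, so $Q\subseteq G_M$, and $Q$ is an $M$-augmenting path in $G_M$, contradicting the hypothesis.

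\emph{Showing $p=0$.} Assume $p\ge 1$ and list the $(+,+)$ edges of $Q$, in the order they occur along $Q$, as $f_1,\dots,f_p$. Consider the $p+1$ sub-paths of $Q$: from the first $M$-exposed endpoint up to and including $f_1$; from $f_j$ to $f_{j+1}$ for $j=1,\dots,p-1$; and from $f_p$ to the second $M$-exposed endpoint. Each is an $M$-alternating path; the first and last start at an $M$-exposed vertex and contain the $(+,+)$ edge $f_1$, resp.\ $f_p$, while each middle one contains the two distinct $(+,+)$ edges $f_j,f_{j+1}$. If any of these sub-paths were entirely contained in $G_M$ it would be a forbidden substructure of type (iii), resp.\ (ii), in Theorem~\ref{thr:characterize-popular}, contradicting popularity of $M$; hence each contains a $(-,-)$ edge, which is a non-$M$ edge lying strictly in the corresponding ``gap'' between consecutive $f_j$'s (or before $f_1$, or after $f_p$). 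These $p+1$ gaps are pairwise disjoint, so $Q$ has at least $p+1$ distinct $(-,-)$ edges, contradicting that it has exactly $p$. Therefore $p=0$, which completes the proof. The main obstacle is precisely this reverse direction: one must rule out that the augmenting-path component $Q$ is ``hidden'' from $G_M$ by $(-,-)$ edges, and the only leverage is the balance $\#(+,+)\text{ edges}=\#(-,-)\text{ edges}$ on $Q$ extracted from $\delta(Q)=0$, which has to be played off carefully against the two path-type forbidden substructures of Theorem~\ref{thr:characterize-popular} on a suitable family of disjoint sub-paths of $Q$.
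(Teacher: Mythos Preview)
The paper does not prove this theorem; it is quoted without proof from~\cite{The popular edge problem}. So there is no ``paper's own proof'' to compare against. That said, your argument is correct in both directions.

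Two minor remarks. First, the phrase ``contradicting the hypothesis'' at the end of the reverse direction is slightly off: what you have actually established is the contrapositive, namely that if $M$ is not dominant then $G_M$ contains an $M$-augmenting path $Q$; this is the desired conclusion, not a contradiction. Second, your pigeonhole on the $p+1$ sub-paths is exactly the mechanism the paper \emph{does} deploy elsewhere, in its proof of Lemma~\ref{lem:node-containement}: there too one balances the number of $(+,+)$ and $(-,-)$ edges on an augmenting component and argues that if a $(+,+)$ edge exists then either some $(+,+)$ edge is reachable from an exposed endpoint without crossing a $(-,-)$ edge (forbidden type~(iii)), or two $(+,+)$ edges occur with no $(-,-)$ edge between them (forbidden type~(ii)). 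Your version, with $p+1$ disjoint gaps needing $p+1$ distinct $(-,-)$ edges against only $p$ available, is a clean and slightly more explicit packaging of the same idea.
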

The following fact is attributed in~\cite{Cseh on popular matchings} to~\cite{On the structure of
	popular matchings in the stable marriage problem}. We give a proof here for completeness. 

\begin{lemma}\label{lem:node-containement}
	Let $(G,<)$ be a preference system, and $M$ (resp. $S$, $D$) be a popular matching (resp. popular matching of minimum size, popular matching of maximum size) in $(G,<)$. Then $V(S)\subseteq V(M)\subseteq V(D)$.
\end{lemma}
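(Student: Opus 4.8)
The two inclusions are dual to each other, so I will describe the argument for $V(S)\subseteq V(M)$; the inclusion $V(M)\subseteq V(D)$ is obtained by the mirror argument, explained at the end. First recall one basic fact: every popular matching is maximal, since if $uv\in E(G)$ had both endpoints exposed by a popular matching $M$, then $M\cup\{uv\}$ would satisfy $\phi(M\cup\{uv\},M)=2>0=\phi(M,M\cup\{uv\})$, contradicting popularity of $M$. Now suppose, for a contradiction, that some vertex $v$ is covered by $S$ but exposed by $M$, and consider $M\triangle S$, which as usual is a vertex‑disjoint union of paths and cycles that alternate between $M$ and $S$. Recall that an endpoint of a path component is covered by exactly one of $M,S$, while an internal vertex (and every vertex of a cycle) is covered by both. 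Since $v$ is covered by $S$ and exposed by $M$, the component $P$ of $M\triangle S$ containing $v$ is a path with $v$ as an endpoint, whose first edge lies in $S\setminus M$. The plan is to pin down the combinatorial type of $P$ and then to show that $S\triangle P$ is a popular matching with $|S\triangle P|=|S|-1$, contradicting the minimality of $|S|$.

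\smallskip
\noindent\emph{Step 1: $P$ has both endpoints covered by $S$.} I claim that no path component $Q$ of $M\triangle S$ can have one endpoint covered by $M$ and the other by $S$. Suppose $Q$ were such a component, with endpoints $x$ (covered by $M$, exposed by $S$) and $y$ (covered by $S$, exposed by $M$). Comparing $M$ with $M\triangle Q$ — these differ only in the status of vertices of $Q$ — popularity of $M$ gives $\phi(M,M\triangle Q)\ge\phi(M\triangle Q,M)$; counting (using that $x$ strictly prefers $M$, that $y$ strictly prefers $M\triangle Q$, and that each internal vertex of $Q$ strictly prefers one of its two $Q$‑neighbours) this reads $1+a\ge 1+b$, where $a$ (resp.\ $b$) is the number of internal vertices of $Q$ preferring their $M$‑neighbour (resp.\ $S$‑neighbour); hence $a\ge b$. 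Comparing $S$ with $S\triangle Q$ and using popularity of $S$ gives symmetrically $b\ge a$. So $a=b$, whence $Q$ has an even number of internal vertices, hence an even number of vertices; but a path whose two end‑edges lie in different matchings has an even number of edges and therefore an odd number of vertices — a contradiction. Consequently both endpoints of $P$ are covered by $S$, so $|S\cap P|=|M\cap P|+1$; moreover $P$ has at least three edges, since a single‑edge component with both endpoints exposed by $M$ would contradict maximality of $M$.

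\smallskip
\noindent\emph{Step 2 (the main obstacle): $S\triangle P$ is popular.} Set $S':=S\triangle P$; then $|S'|=|S|-1$, so once $S'$ is shown to be popular we contradict the minimality of $|S|$ and are done. The matching $S'$ agrees with $S$ outside $V(P)$ and with $M$ on $V(P)$, and the only vertices it exposes that $S$ did not are the two endpoints of $P$. I intend to apply Theorem~\ref{thr:characterize-popular}: if $S'$ were not popular, $G_{S'}$ would contain an $S'$‑alternating cycle with a $(+,+)$ edge, or an $S'$‑alternating path with two $(+,+)$ edges, or an $S'$‑alternating path from an $S'$‑exposed vertex with a $(+,+)$ edge. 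Since edge labels and the exposed set agree for $S$ and $S'$ outside $V(P)$, any such forbidden configuration must meet $V(P)$; and since $S'$ coincides with $M$ on $V(P)$, one reroutes the configuration through the portion of $P$ that it traverses to obtain a forbidden configuration in $G_S$ or in $G_M$, contradicting the popularity of $S$ or of $M$. The careful bookkeeping of how the $(+,+)$, $(+,-)$, $(-,-)$ labels of edges incident to $V(P)$ change between $G_{S'}$, $G_S$, and $G_M$, and the verification that each of the three cases of Theorem~\ref{thr:characterize-popular} transports correctly, is the technical core of the argument. (Equivalently, one may isolate the self‑contained statement that flipping any single component of $M\triangle S$ yields a popular matching, and apply it to $P$.)

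\smallskip
\noindent\emph{The mirror argument.} For $V(M)\subseteq V(D)$, suppose $v$ is covered by $M$ but exposed by $D$. Applying Step~1 to $M\triangle D$ (both $M$ and $D$ being popular) shows that the component of $v$ in $M\triangle D$ is a path $P$ with both endpoints covered by $M$, so $|M\cap P|=|D\cap P|+1$. Then $D\triangle P$ has $|D\triangle P|=|D|+1$, and the argument of Step~2 — with the roles of $S$ and $M$ now played by $D$ and $M$ — shows that $D\triangle P$ is popular, contradicting the maximality of $|D|$. This completes the proof.
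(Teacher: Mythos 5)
Your Step 1 (the parity/balance argument showing that no component of the symmetric difference of two popular matchings can be a path with one endpoint covered by each matching) is correct, and it is essentially the same observation the paper uses when it asserts that the path $P$ in $M\triangle D$ must be $D$-augmenting. The genuine gap is Step 2, which is where all the difficulty of the lemma lives: you never prove that $S':=S\triangle P$ (resp. $D\triangle P$) is popular, you only announce that a forbidden configuration of $G_{S'}$ could be ``rerouted'' into one of $G_S$ or $G_M$, and defer the ``careful bookkeeping.'' That rerouting does not work as described. A forbidden configuration of $G_{S'}$ may use interface edges, i.e.\ edges with exactly one endpoint in $V(P)$; for such an edge one endpoint is matched as in $M$ and the other as in $S$, so its label with respect to $S'$ can be $(+,+)$ while it is $(+,-)$ or $(-,+)$ with respect to both $S$ and $M$, and the alternation of the configuration is itself hybrid ($S$-edges outside $V(P)$, $M$-edges inside). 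So the configuration cannot simply be reinterpreted in $G_S$ or $G_M$, and no argument is given for how to splice it with $P$.

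Worse, the ``self-contained statement'' you propose to isolate — that flipping any single component of $M\triangle S$ for two popular matchings yields a popular matching — is false. Take $A=\{u,z,r,w\}$, $B=\{p,q,v,x\}$, edges $up,uq,uv,zp,zq,rq,rv,rx,wv,wx$, and preferences $u\colon p>v>q$; $z\colon p>q$; $r\colon v>x>q$; $w\colon x>v$; $p\colon z>u$; $q\colon z>r>u$; $v\colon w>u>r$; $x\colon r>w$. Both $M=\{up,zq,rv,wx\}$ and $M'=\{uq,zp,rx,wv\}$ are popular (with respect to $M'$ there is no $(+,+)$ edge at all; with respect to $M$ the only $(+,+)$ edge is $zp$, and it lies on no forbidden structure of Theorem~\ref{thr:characterize-popular} because both non-matching edges at $q$ are $(-,-)$ and $M$ is perfect), and $M\triangle M'$ consists of the two $4$-cycles $u,p,z,q$ and $r,v,w,x$. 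Flipping only the first cycle gives $N=\{uq,zp,rv,wx\}$, and $u,v,r,q$ is an $N$-alternating cycle of $G_N$ containing the $(+,+)$ edge $uv$, so $N$ is not popular. Of course, in your application $S$ has minimum size and $P$ has both end edges in $S$, a configuration that (by the lemma itself) never actually occurs, so the claim is not refuted in that special case — but a correct proof would then have to use the minimality/maximality hypothesis in an essential way, which your sketch does not. Note how the paper sidesteps this entirely: it never proves a modified matching popular. The inclusion $V(S)\subseteq V(M)$ is quoted from known results on stable matchings, and for $V(M)\subseteq V(D)$ it takes $D$ dominant, uses your Step 1 count to see that $P$ is $D$-augmenting in $G$, and then compares the numbers of $(+,+)$ and $(-,-)$ edges along $P$ (with labels taken with respect to $D$) to exhibit either a $D$-augmenting path in $G_D$, contradicting dominance, or a path violating condition (ii) or (iii) of Theorem~\ref{thr:characterize-popular}. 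If you want a self-contained argument, that is the route to adapt.
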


\begin{proof}
	The fact that stable matchings are popular matchings of minimum size and that $V(S)\subseteq V(M)$ whenever $S$ is stable and $M$ is popular are known, see e.g.~\cite{A size-popularity tradeoff in the stable marriage problem}.  Hence, for any popular matching $S'$ of minimum size, we have $V(S')\supseteq V(S)$ and $|S'|=|S|$, which implies $V(S')=V(S)\subseteq V(M)$ for any popular matching $M$, settling the first part. Now suppose that $M$ again is popular, $D$ is a dominant matching, and there exists $v \in V(M)\setminus V(D)$. Then, $G(M\triangle D)$ contains an alternating path $P$ starting at $v$ with an edge of $M$. Since both matchings are popular, the number of nodes that prefer $M$ to $D$ in $P$ are exactly the same number of nodes that prefer $D$ to $M$ in $P$. In particular, $P$ has an even number of nodes and is therefore $D$-augmenting in $G$. Now label edges of $P\setminus D$ with $+,-$ as discussed at the beginning of the section. Note that the number of $(+,+)$ edges and $(-,-)$ edges in $P$ coincide, otherwise either $D$ or $M$ is not popular. We consider three cases: if there is no $(+,+)$ edge, hence no $(-,-)$, the path is $D$-augmenting in $G_D$, contradicting the fact that $D$ is dominant. Now suppose there is a $(+,+)$ edge $e$. Since the number of $(+,+)$ and $(-,-)$ edges in $P$ coincide, one of the following two facts happen: either we can assume wlog that $e$ is traversed in $P$ before any $(-,-)$ edge, or there is a subpath of $P$ that contains two $(+,+)$ edges and no $(-,-)$ edge. In both circumstances, we contradict Theorem~\ref{thr:characterize-popular}: condition (ii) in the second case, and condition (iii) in the first.
\end{proof}

\section{The reduction}\label{sec:reduction}

\subsection{From a restricted $3$-SAT formula to $G(\psi)$}

A 3-SAT instance $\psi$ is a logic formula given by the conjunction of a collection of ${\cal K}$ clauses, each of which is the disjunction of at most $3$ literals. We denote the set of variables by ${\cal X}=\{x_1,\dots, x_n\}$, the set of clauses by ${\cal C}=\{c_1,\dots,c_m\}$, and the set of literals by ${\cal L}=\{\ell_1,\dots,\ell_{p}\}$, with $p\leq 3m$. Literals of the form $x_i$ are called \emph{positive} (\emph{occurrences of variable $i$}), while those of the form $\neg x_i$ are called \emph{negative} (\emph{occurrences of variable $i$}). In this paper, we restrict our attention to monotone 3-SAT, which is the version of 3-SAT where each clause consists of only positive or only negative literals. This is also known to be NP-hard, since as observed in~\cite{3sat monotone}, a reduction from 3-SAT is achieved using the following mapping: replace clause $c=(x_1 \lor \neg x_2 \lor \neg x_3)$ by $(x_1 \lor x_c)\land(\neg x_c \lor \neg x_2 \lor \neg x_3)$ and replace clause $c=(x_1 \lor x_2 \lor \neg x_3)$ by $(x_1 \lor x_2 \lor x_c)\land(\neg x_c \lor \neg x_3)$, where for each clause $c$, $x_c$ is a new variable. The other clauses remain unchanged.

 We refer therefore to clauses of a restriced 3-SAT instance as positive clauses and negative clauses accordingly. From now on, we assume without loss of generality that monotone 3-SAT instances under consideration do not consist of only positive (resp. negative) literals, or that a variable only appears as a positive (resp. negative) literal.

\begin{figure}
\begin{center}
\includegraphics[scale=1]{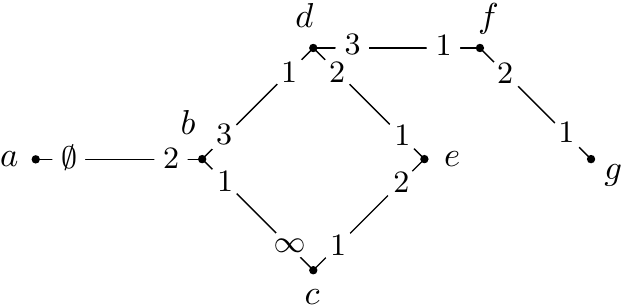}\hspace{1cm}\includegraphics[scale=1]{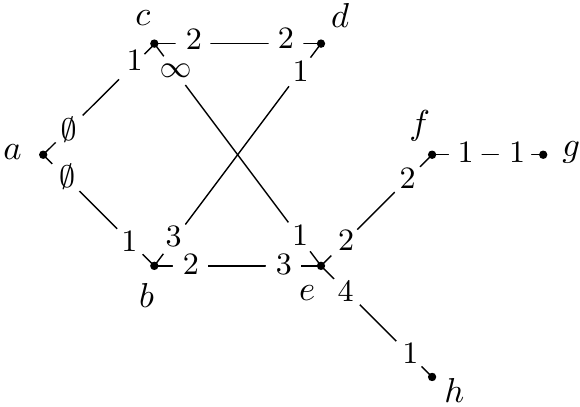}
\end{center}	\caption{The gadget associated to the positive (left) and negative (right) literals, and the corresponding canonical incomplete preference patterns.}
\label{fig:gadget-literals}
\end{figure}

\smallskip

Define the \emph{positive gadget} to be the graph $H$ constructed as in Figure~\ref{fig:gadget-literals}, left. Node $a$ is called the \emph{apex} of $H$. Node $c$ is called the \emph{consistency enforcer} of $H$. Node $g$ is called the \emph{gateway} of $H$. Note that Figure~\ref{fig:gadget-literals}, left, also defines an incomplete preference pattern $I_H$, which we call \emph{canonical}.

Define the \emph{negative gadget} to be the graph $N$ constructed as in Figure~\ref{fig:gadget-literals}, right. Node $a$ is called the \emph{apex} of $N$. Node $c$ is called the \emph{consistency enforcer} of $N$. Node $g$ is called the \emph{gateway} of $N$. Edge $eh$ is called the \emph{evicted} edge of $N$.
Figure~\ref{fig:gadget-literals}, right, also defines an incomplete preference pattern $I_N$, which we call \emph{canonical}.
	
To each positive (resp. negative) literal $\ell$ of a restricted $3-$SAT instance, we associate one positive (resp. negative) gadget $H(\ell)$ (resp. $N(\ell)$). We will write $G(\ell)=(V(\ell),E(\ell))$ to denote the gadget associated to a literal $\ell$, whether it is positive or negative. Nodes of $G(\ell)$ will be denoted by $a(\ell)$, $b(\ell)$, ... .

\smallskip

Consider a $3$-SAT instance $\psi$. Fix an arbitrary order of clauses and of literals in each clause. Define the graph $G:=G(\psi)$ as follows. First, construct disjoint graphs $G(\ell)$ for each literal $\ell$, and add nodes $u$ and $v$. Now fix a clause $c$, and let $\ell_1,\ell_2, \ell_3$ be the literals from the clause, in this order (with $\ell_3$ possibly not present). Identify the apex of $G(\ell_1)$ with $u$; the apex of $G(\ell_i)$, $i\geq 2$ with the gateway of $G(\ell_{i-1})$; and add edge $g(\ell_k)v$, where $\ell_k$ is the last literal of the clause.  Repeat the operation for all clauses (note that vertices $u$ and $v$ are unique). Let $\bar G(\bar V,\bar E)$ be the graph obtained, and define
$$V(G):=\{s,t,w,x,y\}\cup \bar V.$$

Define $E(G):= \bar E \cup E_1 \cup E_2$, where:
\begin{itemize}
	\item[]$E_1:=\{st,tu,vw,wx,xy\}$;
	\item[]$E_2:=\cup_{i=1,...,n} \{c(\ell)c(\ell') : \hbox{ $\ell$ is a positive occurrence of variable $i$} \\  \quad \hbox{ and $\ell'$ is a negative occurrence of variable $i$}\} \quad $ (\emph{consistency edges}). 
%
\end{itemize}
\noindent See Figure~\ref{fig:example} for an example of the graph $G(\psi)$.

\begin{lemma}
Let $G(\psi)$ be defined as above for a monotone $3-SAT$ instance $\psi$. Then $G(\psi)$ is bipartite.
\end{lemma}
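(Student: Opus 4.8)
The plan is to prove bipartiteness by exhibiting an explicit proper $2$-coloring $\chi:V(G(\psi))\to\{0,1\}$, which is equivalent to bipartiteness. The coloring will be built one gadget at a time and then glued along the identified vertices and extended across the edges of $E_1\cup E_2$. The first thing I would do is record the structural facts about the two gadgets that are read off directly from Figure~\ref{fig:gadget-literals}: each of $H$ and $N$ is a connected bipartite graph, hence has a unique bipartition up to swapping the two sides; in this bipartition the apex $a$ and the gateway $g$ lie on the \emph{same} side; and the consistency enforcer $c$ lies on the same side as $a$ in the positive gadget $H$ and on the side \emph{opposite} to $a$ in the negative gadget $N$ (the only thing that matters is that the position of $c$ relative to $a$ is opposite in the two gadget types). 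Accordingly, for each literal $\ell$ fix the bipartition $(A(\ell),B(\ell))$ of $G(\ell)$ with $a(\ell),g(\ell)\in A(\ell)$; then $c(\ell)\in A(\ell)$ if and only if $\ell$ is positive.

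Next I would assemble the global coloring. Set $\chi\equiv 0$ on $\bigcup_\ell A(\ell)$ and $\chi\equiv 1$ on $\bigcup_\ell B(\ell)$; this is well defined because every vertex of $\bar V$ lying in two gadgets is either the vertex $u$ (the apex of the first gadget of several clauses) or a vertex obtained by identifying the apex of some $G(\ell_i)$ with the gateway of $G(\ell_{i-1})$, so in each gadget containing it the vertex is an apex and/or a gateway, hence lies in that gadget's $A$-side, and the value $0$ is consistent. Each edge of a gadget is then properly colored since $(A(\ell),B(\ell))$ is a bipartition. Now extend $\chi$ by $\chi(v)=\chi(t)=\chi(x)=1$ and $\chi(s)=\chi(w)=\chi(y)=0$. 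Since $u\in A$, i.e.\ $\chi(u)=0$, the edges $st,tu,vw,wx,xy$ of $E_1$ are properly colored; every edge $g(\ell_k)v$ added for a clause joins a vertex of an $A$-side ($\chi=0$) to $v$ ($\chi=1$), so it too is properly colored; and a consistency edge of $E_2$ has the form $c(\ell)c(\ell')$ with $\ell$ a positive and $\ell'$ a negative occurrence of a common variable, so $c(\ell)\in A(\ell)$, $c(\ell')\in B(\ell')$ and $\chi(c(\ell))=0\ne 1=\chi(c(\ell'))$. Hence $\chi$ is proper and $G(\psi)$ is bipartite. This is exactly the point where monotonicity of $\psi$ enters: because $\psi$ is monotone, the two occurrences joined by any $E_2$-edge are necessarily one positive and one negative, never two of the same sign.

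The one step that is not completely routine is the first: checking, from the explicit adjacencies in Figure~\ref{fig:gadget-literals}, that within each gadget the apex and the gateway fall on the same side of the bipartition, and that the consistency enforcer falls on opposite sides in $H$ versus $N$. I expect this small verification — which is also what the rest of the argument hinges on — to be the main obstacle; everything after it is bookkeeping.
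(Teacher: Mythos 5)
Your proposal is correct and follows essentially the same route as the paper: both arguments exhibit an explicit proper $2$-coloring, yours assembled from gadget-local bipartitions with $a(\ell),g(\ell)$ on one side and $c(\ell)$ placed oppositely in positive versus negative gadgets, the paper's written out vertex by vertex (in $H(\ell)$: $a,d,c,g$ versus $b,e,f$; in $N(\ell)$: $a,d,e,g$ versus $b,c,f,h$), which in particular confirms the gadget-level facts you defer to the figure.
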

\begin{proof}
We assign each vertex from $V:=V(G(\psi))$ to exactly one of two sets $A$ or $B$, and prove that if $u,v \in A$ (or $u,v \in B$) then $uv \notin E:=E(G(\psi))$. The proof can be followed in Figure \ref{fig:example}, where nodes are colored according to the set of the bipartition they belong to. Assign $s, u, w, y$ to $A$ and $t, v, x$ to $B$. For any $\ell$ corresponding to a positive literal, assign $a(\ell), d(\ell), c(\ell), g(\ell)$ to $A$ and $b(\ell), e(\ell), f(\ell)$ to $B$. For any $\ell$ corresponding to a negative literal, assign $a(\ell), d(\ell), e(\ell), g(\ell)$ to $A$ and $b(\ell), c(\ell), f(\ell), h(\ell)$ to $B$. Note that this assignment is consistent with the fact that some $a(\ell)$ and $g(\ell)$ are identified, and that some $a(\ell)$ are identified with $u$ (they are all assigned to $A$). 
One easily checks that edges in $\bar E \cup E_1$ connect nodes in the same set of the bipartition. Now consider $c(\ell)c(\ell') \in E_2$. By construction, we can assume wlog that $\ell$ (resp. $\ell'$) is a positive (resp. negative) literal. Hence $c(\ell) \in A$, $c(\ell') \in B$, as required. \end{proof}


Our hardness results are based on the following lemma.

\begin{lemma}
\label{lem: 3SAT-popular}
Consider the graph $G:=G(\psi)$, and set $F:= \{st,wx\}.$
There exists a preference system $<$ with the following property: $(G(\psi),<)$ has a popular matching that does not contain any edge in $F$ if and only if $\psi$ is satisfiable. Given $\psi$, this preference system can be found in polynomial time.
\end{lemma}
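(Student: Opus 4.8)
The plan is to build the preference system $<$ by gluing the canonical patterns of Figure~\ref{fig:gadget-literals} along the edges of $E_1\cup E_2$, then to establish a \emph{gadget dictionary} describing exactly how a popular matching can look inside each copy $G(\ell)$, and finally to combine these local descriptions along the consistency edges and the clause chains. Concretely, on each $G(\ell)$ we use the strict order induced by $I_H$ (resp.\ $I_N$); the only vertices whose degree grows when $G(\ell)$ is embedded in $G(\psi)$ are the apex, the gateway and the consistency enforcer, and for each of them we append the new neighbour(s) --- the node $u$ or a gateway of the preceding literal, the node $v$, and the enforcer reached through $E_2$ --- at the tail of the list, in accordance with the $\infty$-entries of the canonical pattern; the edges $st,vw,xy,tu,wx$ get any fixed consistent orientation at $t,u,v,w,x,y$. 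Each list is a constant-length prefix followed by a suffix of length at most the number of occurrences of the variable, so $<$ is computed in time polynomial in $|\psi|$.

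\emph{The gadget dictionary.} Fix any popular matching $M$ of $(G(\psi),<)$ and a literal $\ell$, and apply Theorem~\ref{thr:characterize-popular} to $M$ inside the subgraph spanned by $G(\ell)$ together with the one or two edges of $E_1\cup E_2$ leaving it. The canonical patterns are designed so that this forces $M\cap E(\ell)$ into one of a small, explicitly listed family of configurations, splitting into a \emph{true} class and a \emph{false} class according to whether the apex $a(\ell)$ is matched inside $G(\ell)$ (for the negative gadget the status of the evicted edge $eh$ enters as well). For each class we record which non-matching edges of $G(\ell)$ carry the labels $(+,+)$ and $(-,-)$ in $G_M$, and we observe that a $(+,+)$ label can appear on the consistency edge at $c(\ell)$, or be ``exported'' through the gateway, only when the gadget is in one specific class. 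This dictionary is the technical heart of the proof.

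\emph{From a popular matching avoiding $F$ to a satisfying assignment.} Let $M$ be popular with $st,wx\notin M$. Since $t$ is the only neighbour of $s$, the node $s$ is $M$-exposed; if $t$ were also exposed then $st$ would be a $(+,+)$ edge forming a one-edge $M$-alternating path from an exposed vertex, contradicting Theorem~\ref{thr:characterize-popular}(iii), so $tu\in M$. The symmetric analysis of the tail $v,w,x,y$ gives $xy\in M$ and constrains how $v$ and $w$ are matched. In particular $u=a(\ell_1)$ of every clause is matched outside its gadget, which by the dictionary pins down the state of $G(\ell_1)$, and the gateway end is likewise constrained through $v$. Next, a consistency edge $c(\ell)c(\ell')\in E_2$ joins a positive and a negative occurrence of the same variable; if the two gadgets were in incompatible states, the dictionary makes $c(\ell)c(\ell')$ a $(+,+)$ edge and produces inside $G(\ell)\cup G(\ell')\cup\{c(\ell)c(\ell')\}$ either an $M$-alternating cycle through it or an $M$-alternating path carrying a second $(+,+)$ edge, contradicting Theorem~\ref{thr:characterize-popular}(i)--(ii). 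Hence all occurrences of a variable share a state, defining a truth assignment $\tau$. Finally, if some clause $c$ had all its literals in the \emph{false} state, concatenating the path $s,t,u$, the chain through $G(\ell_1),\dots,G(\ell_k)$ (consecutive gadgets sharing an apex/gateway), the edge $g(\ell_k)v$ and the tail $v,w,x,y$ yields an $M$-alternating path starting at the exposed vertex $s$ that, by the dictionary, contains a $(+,+)$ edge inside one of the false gadgets; this again contradicts Theorem~\ref{thr:characterize-popular}(iii). So $\tau$ satisfies every clause.

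\emph{Converse, and the main obstacle.} Given a satisfying assignment $\tau$, build $M$ by placing each gadget in the state dictated by $\tau$ --- using, in every clause, one true literal to ``absorb'' the alternating chain --- and completing $M$ on $E_1$ so that $st,wx\notin M$ and $tu,xy\in M$; the dictionary then locates all $(+,+)$ edges of $G_M$, and a direct check shows that none of the three configurations forbidden by Theorem~\ref{thr:characterize-popular} occurs, so $M$ is popular, giving the ``if'' direction. The only delicate part of the whole argument is the gadget dictionary: enumerating the finitely many locally-popular configurations of $H$ and $N$ together with their boundary edges, and --- the truly design-sensitive point --- choosing the canonical preference patterns of Figure~\ref{fig:gadget-literals} so that the $(+,+)$/$(-,-)$ bookkeeping propagates correctly across $E_1$ and $E_2$. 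Once that dictionary is in hand, the consistency argument, the clause argument and the converse are all routine verifications with Theorem~\ref{thr:characterize-popular}.
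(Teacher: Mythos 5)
Your outline follows the same architecture as the paper's proof (same gadget gluing, same use of Theorem~\ref{thr:characterize-popular}, same forward/backward structure), but as it stands it has two genuine gaps. First, the ``gadget dictionary'' that you yourself identify as the technical heart is never actually constructed: you do not enumerate the configurations, nor verify that popularity forces one of them. Note also that this step is not purely local to a single gadget, as your phrasing suggests. In the paper the analogue (Lemma~\ref{M contains (t,u), apex is matched to its first choice}) needs $tu\in M$ as a global hypothesis, excludes evicted and consistency edges by an argument that crosses from a negative gadget into the positive gadget attached to it through $E_2$ (via Lemma~\ref{no consistency edges}), and proves that gateways are matched to their favourite partners by an induction along the clause chain (``take the first literal violating the thesis''), not by inspecting one copy of $H$ or $N$ in isolation.

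Second, and more seriously, your claim that the converse direction is a ``routine verification'' once the dictionary is in hand is where the proposal would fail. After constructing $M$ from a satisfying assignment, the dangerous structure is condition (iii) of Theorem~\ref{thr:characterize-popular}: an $M$-alternating path from an exposed vertex ($s$ or some $h(\ell)$) to the unique $(+,+)$ edge $wx$. Such a path need not stay inside one clause chain; it can leave a gadget through a consistency edge, wander into gadgets of other clauses, and try to re-enter a chain leading to $v$. A purely local table of $(+,+)$/$(-,-)$ labels does not exclude this, which is exactly why the paper needs the dedicated global arguments of Claims~\ref{claim: no malicious path in a clause}--\ref{claim: a(ell) is not in P} (e.g.\ that a malicious path using a consistency edge at a positive gadget cannot also use $f(\ell)$, and at a negative gadget cannot also use $a(\ell)$ unless $a(\ell)=u$), followed by a case analysis on the first consistency edge traversed to show the path can never reach $wx$. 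Without supplying these path-propagation arguments (or an equivalent), the ``if'' direction of the lemma is not proved; everything else in your plan is a faithful, if compressed, version of the paper's argument.
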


We devote the rest of Section~\ref{sec:reduction} to the proof of Lemma~\ref{lem: 3SAT-popular}. The reader interested in how Lemma~\ref{lem: 3SAT-popular} is applied to deduce the proofs of main results can skip directly to Section~\ref{sec:main-results}.

\begin{figure}[h]
\centering
\includegraphics[scale = 0.85]{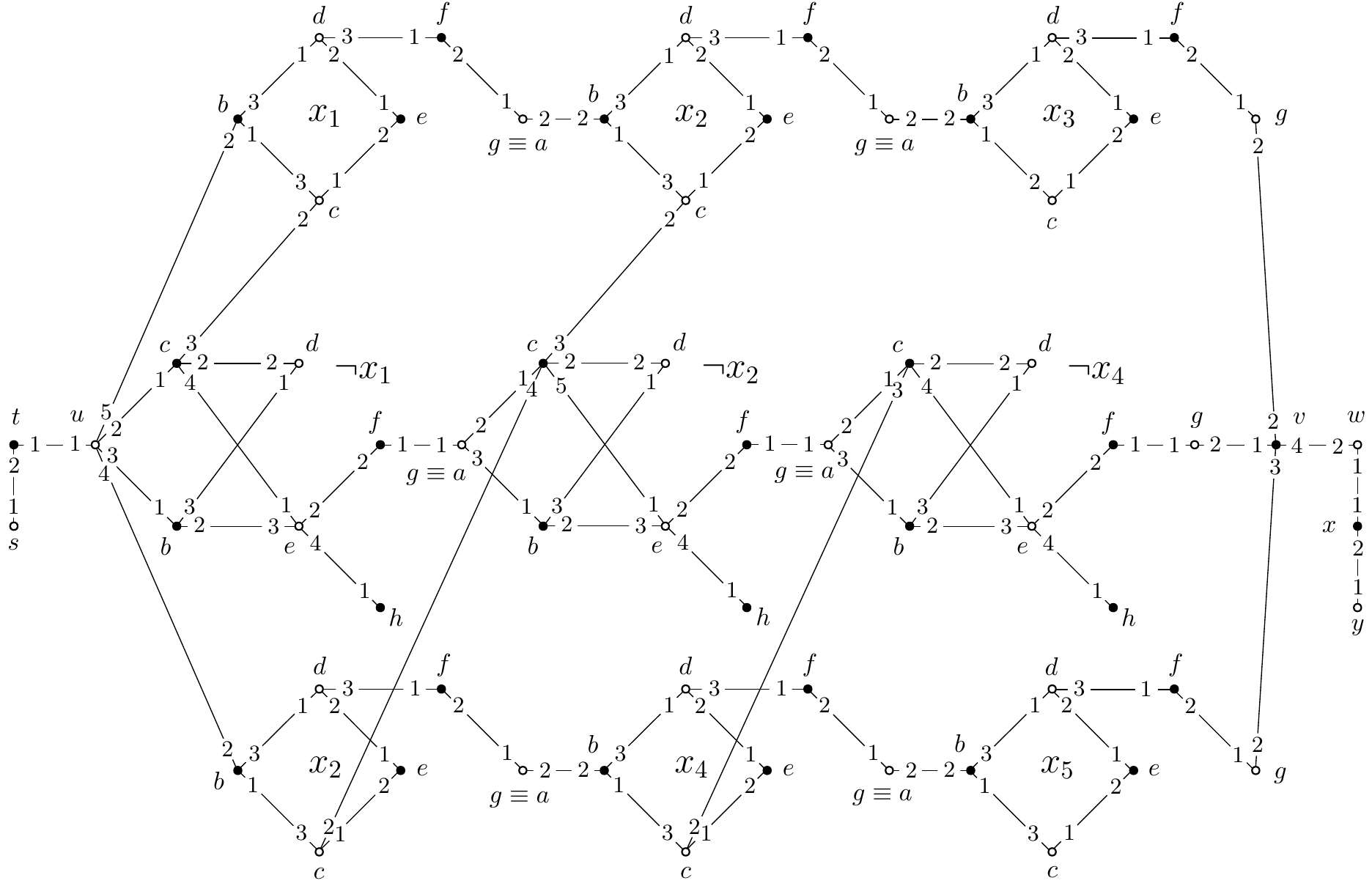}
\caption{A monotone $3$-SAT instance $\psi = (x_1 \lor x_2 \lor x_3) \land (\neg x_1 \lor \neg x_2 \lor \neg x_4) \land (x_2 \lor x_4 \lor x_5)$  and the associated graph $G(\psi)$ and preference system. Disk and circle labels of vertices provide a bipartition of $G(\psi)$.}
\label{fig:example}
\end{figure}

\subsection{The preference system}

In the remaining of Section~\ref{sec:reduction}, we fix a monotone 3-SAT instance $\psi$ and the corresponding graph $G(V,E):=G(\psi)$. We start by showing how to define the preference system via the preference pattern $L$.

Consider the union $I$ of all canonical incomplete preference patterns for gadgets corresponding to literals of $\psi$. Recall that vertex sets of gadgets are disjoint subsets of $V$, with the exception of $g(\ell_1)\equiv a(\ell_2)$, with $\ell_2$ being the literal following $\ell_1$ in some clause, and $u \equiv a(\ell)$, with $\ell$ first literal in some clause. However, the canonical incomplete preference patterns assign $\emptyset$ to edges incident to nodes $a(\ell)$. Hence, $I$ is an incomplete preference pattern for $G$. 

Consider the preference system $L$ that: agrees with $I$; has values $L(s,t)=1$, $L(t,s)=2$, $L(u,t)=L(t,u)=1$, $L(v,w)=|\Gamma(v)|$, $L(w,v)=2$, $L(w,x)=1$, $L(x,w)=1$, $L(x,y)=2$, $L(y,x)=1$; assigns $L$ for the remaining ordered pairs of adjacent nodes for which it has not been defined yet.

%
%

\begin{itemize}
\item[-]
$L(u,b(\ell))$, for all literals $\ell$ that are first in some clause, and $uc(\ell)$ for all negative literals that are first in some clause. Since $L(u,t)=1$ and no other edge is incident to $u$, we can assign $L(u,v)$ for $v=b(\ell)$ or $v \in \{b(\ell), c(\ell)\}$ as above, through an arbitrary bijection to $\{2,\dots, |\Gamma(v)|\}$.
\item[-]
$L(g(\ell),v)$ and $L(v,g(\ell))$, for all literals $\ell$ that are last in some clause. Since all such $g(\ell)$ have degree 2 in $G$, and $L(g(\ell), f(\ell))=1$, we set $L(g(\ell), v)=2$. Moreover, $\Gamma(v)$ is composed of all such $g(\ell)$ and vertex $w$, to which we already assigned $L(v,w)=|\Gamma(v)|$. We assign therefore $L(v,g(\ell))$ via an arbitrary bijective map to $\{1,\dots,|\Gamma(v)|-1\}$.
\item[-]
$L(c(\ell),c(\ell'))$ and $L(c(\ell'),c(\ell))$, where $\ell$ is a positive occurrence and $\ell'$ is a negative occurrence of the same variable. Note that we already assigned $L(c(\ell), e(\ell))=1$, $L(c(\ell), b(\ell))=|\Gamma(c(\ell))|$. We assign $L(c(\ell), c(\ell'))$ for a fixed $c(\ell)$ via an arbitrary bijective map to $\{2,\dots,|\Gamma(c(\ell))|-1\}$. Similarly, we assign $L(c(\ell'), c(\ell))$ for a fixed $c(\ell')$ via an arbitrary bijective map to $\{3,\dots,|\Gamma(c(\ell'))|-1\}$
\item[-]
$L(a(\ell),b(\ell))$ for all positive literals $\ell$ that are not first in some clause. Recall that we have $L(a(\ell),f(\ell'))=1$ for some $\ell'$. We set therefore $L(a(\ell),b(\ell))=2$.
\item[-]
$L(a(\ell),b(\ell))$, $L(a(\ell),b(\ell))$ for all negative literals $\ell$ that are not first in some clause. Recall that we have $L(a(\ell),f(\ell'))=1$ for some $\ell'$. We set therefore $L(a(\ell),b(\ell))=3$, $L(a(\ell),c(\ell))=2$.

\end{itemize}

We let $(G,<)$ be the preference system corresponding to the preference pattern $L$. We now start deriving properties of matchings that are popular in $(G,<)$ and satisfy some further hypothesis. In the proof of the next and following lemmas, we often deduce a contradiction using the characterization of popular matchings given in Theorem~\ref{thr:characterize-popular}, hence we omit referring explicitly to this theorem each time.

\subsection{Properties of certain popular matchings of $(G,<)$}

\begin{lemma}
\label{no consistency edges}
Let $M$ be a popular matching of $(G,<)$ that does not contain any evicted edge. Then $M$ does not contain any consistency edge, i.e., $M\cap E_2=\emptyset$. \end{lemma}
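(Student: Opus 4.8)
The plan is to argue by contradiction. Suppose $M$ is popular, uses no evicted edge, yet $c(\ell)c(\ell')\in M$ for some consistency edge; by the convention fixed when building $E_2$ we may take $\ell$ to be a positive and $\ell'$ a negative occurrence of a common variable. I will exhibit an $M$-alternating path in $G_M$ carrying two distinct $(+,+)$ edges, contradicting Theorem~\ref{thr:characterize-popular}(ii) (or, in a degenerate sub-case where an endpoint is $M$-exposed, contradicting (iii)), which forces $M\cap E_2=\emptyset$.

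The candidate path is the four-vertex path $e(\ell)\, c(\ell)\, c(\ell')\, e(\ell')$: its middle edge $c(\ell)c(\ell')$ lies in $M$, while $e(\ell)c(\ell)$ and $c(\ell')e(\ell')$ do not, so it is $M$-alternating; it remains to check that the two outer edges are $(+,+)$ (hence present in $G_M$), and they are trivially distinct since the four vertices lie in two disjoint gadgets. At the consistency-enforcer ends the labels follow directly from the preference pattern: $c(\ell)$ has $e(\ell)$ as its favourite, so it prefers $e(\ell)$ to $c(\ell')$; and reading off $I_N$ together with the assignment chosen for $L(c(\ell'),c(\ell))$, the vertex $c(\ell')$ ranks $e(\ell')$ ahead of $c(\ell)$, so it prefers $e(\ell')$ to $c(\ell)$. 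The real content is the two claims: \textbf{(a)} $e(\ell)$ is either $M$-exposed or matched to a neighbour it likes strictly less than $c(\ell)$, and \textbf{(b)} the same for $e(\ell')$ with respect to $c(\ell')$. For (a) I would use that $c(\ell)$ is already matched (to $c(\ell')$), restrict $M$ to the positive gadget $H(\ell)$, and invoke popularity on the short internal $M$-alternating paths and cycles of $G_M$ inside $H(\ell)$, together with Lemma~\ref{lem:node-containement} to force coverage of the relevant gadget vertices, so as to rule out every internal matching of $H(\ell)$ in which $e(\ell)$ would be matched to a neighbour at least as good as $c(\ell)$. For (b) I would run the analogous argument in the negative gadget $N(\ell')$; here the hypothesis that $M$ avoids the evicted edge $e(\ell')h(\ell')$ is precisely what eliminates the internal matchings of $N(\ell')$ that would otherwise keep $e(\ell')$ matched to something it prefers to $c(\ell')$.

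Once (a) and (b) are in hand, $e(\ell)c(\ell)$ and $c(\ell')e(\ell')$ are $(+,+)$ edges of $G_M$, so $e(\ell)\, c(\ell)\, c(\ell')\, e(\ell')$ is an $M$-alternating path in $G_M$ with two distinct $(+,+)$ edges (starting at an $M$-exposed vertex if $e(\ell)$ or $e(\ell')$ happens to be exposed), contradicting Theorem~\ref{thr:characterize-popular}(ii) (resp.\ (iii)). I expect the main obstacle to be claims (a) and (b): each literal gadget has seven or eight vertices and several edges, so verifying that popularity --- reinforced, for the negative gadget, by the no-evicted-edge hypothesis --- pins the restriction of $M$ to a single admissible configuration is a moderately involved finite case check. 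By contrast, the assembly of the contradicting path and the verification that all of its edges lie in $G_M$ are routine.
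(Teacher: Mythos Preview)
Your plan has a genuine gap at claim (a), and the role you assign to the no-evicted-edge hypothesis is misplaced.

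First, look again at the positive gadget. In $H(\ell)$ one has $\Gamma(e(\ell))=\{c(\ell),d(\ell)\}$, and the canonical preferences make $d(\ell)$ the \emph{favourite} of $e(\ell)$, not $c(\ell)$. So claim (a) is equivalent to ``$e(\ell)$ is $M$-exposed''. But this does not follow from local considerations: once $c(\ell)c(\ell')\in M$, the configuration $d(\ell)e(\ell),\,a(\ell)b(\ell),\,f(\ell)g(\ell)\in M$ is perfectly consistent with popularity \emph{inside} $H(\ell)$ --- every non-matching edge among $b,c,d,e,f$ then has at least one $-$ label, so there is no short internal contradiction to invoke, and Lemma~\ref{lem:node-containement} gives you nothing here. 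To derive a contradiction in this sub-case you must leave $H(\ell)$ (through $a(\ell)$, which is $u$ or some $g(\ell'')$), and that requires an entirely different argument from the four-vertex path you propose.

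Second, claim (b) is in fact trivially true, and not for the reason you give: in the negative gadget $c(\ell')$ is the \emph{favourite} of $e(\ell')$, so $e(\ell')$ automatically prefers $c(\ell')$ to whatever it is matched to. The hypothesis that the evicted edge $e(\ell')h(\ell')$ is absent from $M$ is not what makes (b) work; its real role is to leave $h(\ell')$ exposed and to force $e(\ell')$ into $f(\ell')$ once the other neighbours are accounted for. This is exactly how the paper proceeds, working entirely inside $N(\ell')$: from $c(\ell')$ being matched away one argues that $d(\ell')$ must be covered (else $c(\ell')d(\ell')$ is a $(+,+)$ edge at an exposed vertex), hence $b(\ell')d(\ell')\in M$; then $e(\ell')$ must be covered (else $b(\ell')e(\ell')$ is $(+,+)$ at an exposed vertex), and since $e(\ell')h(\ell')\notin M$ this forces $e(\ell')f(\ell')\in M$; finally $h(\ell'),e(\ell'),f(\ell'),g(\ell')$ is an $M$-alternating path from the exposed $h(\ell')$ containing the $(+,+)$ edge $f(\ell')g(\ell')$, contradicting Theorem~\ref{thr:characterize-popular}(iii). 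No excursion into the positive gadget is needed.
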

\begin{proof}
	Suppose by contradiction that $M$ contains a consistency edge, connecting positive gadget $H(\ell)$ and negative gadget $N(\ell')$. We focus now on nodes of $N(\ell')$, hence omit the dependency on $\ell'$. Since the consistency edge incident to $c$ is matched, if $d$ is not matched, then $cd$ would be a $(+,+)$ edge adjacent to an unmatched vertex, which contradicts popularity. Hence, $bd\in M$, which in turn implies $ef\in M$, otherwise $be$ is a $(+,+)$ edge adjacent to an unmatched vertex, again a contradiction (recall that $eh \notin M$ by hypothesis). This implies that $he$ is a $(+,-)$ edge, $fg$ is a $(+,+)$ edge, and $h,e,f,g$ is an $M$-alternating path containing a $(+,+)$ edge, a contradiction. \end{proof}

\begin{lemma}
\label{M contains (t,u), apex is matched to its first choice}
	Let $M$ be a popular matching of $(G,<)$ that contains $tu$. Then:
	
	\begin{enumerate}
	\item\label{uno} The gateway (resp. apex) of each gadget is matched to its favorite partner;
	\item\label{due} $M$ does not contain any consistency or evicted edge;
	\item\label{tre} In each positive gadget $H(\ell)$, exactly one of the following is true (dependency on $\ell$ is omitted):
	
	\begin{enumerate}
		\item $\mathcal{F}(H):=\{bd, ce\}\subseteq M$, or 
		\item $\mathcal{T}(H):=\{bc, de\}\subseteq M$.
	\end{enumerate}
	
	Similarly, in each negative gadget $N(\ell')$, exactly one of the following is true (dependency on $\ell'$ is omitted):
	
	\begin{enumerate}
		\item $\mathcal{F}(N):=\{cd, be\}\subseteq M$, or 
		\item $\mathcal{T}(N):=\{ce, bd\}\subseteq M$.
	\end{enumerate}
	
	\item\label{quattro} If $\ell$ is a positive and $\ell'$ a negative occurrence of the same variable, then we cannot have both $\mathcal{T}(H(\ell))\subseteq M$ and $\mathcal{T}(N(\ell))\subseteq M$.

	\end{enumerate}
\end{lemma}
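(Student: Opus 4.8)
The plan is to read everything off Theorem~\ref{thr:characterize-popular} --- popularity of $M$ means $G_M$ has no $M$-alternating cycle through a $(+,+)$ edge, no $M$-alternating path through two $(+,+)$ edges, and no $M$-alternating path from an $M$-exposed vertex through a $(+,+)$ edge --- together with the explicit preference pattern $L$ defined above, and to propagate information along the literal-chain of each clause. The starting point is that for the first literal $\ell_1$ of any clause $a(\ell_1)=u$ is matched in $M$ to $t$, which by $L(u,t)=1$ is its favorite partner.

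I would then prove, by induction along the literals $\ell_1,\dots,\ell_k$ of a clause, the statement $(\ddagger)$: \emph{if $a(\ell_i)$ is matched in $M$ to its favorite partner (necessarily a vertex outside the gadget $G(\ell_i)$), then $f(\ell_i)g(\ell_i)\in M$, $M$ contains no evicted edge of $G(\ell_i)$, and exactly one of $\mathcal{F}(G(\ell_i))\subseteq M$, $\mathcal{T}(G(\ell_i))\subseteq M$ holds.} Since $g(\ell_i)=a(\ell_{i+1})$ and $f(\ell_i)$ is the favorite partner of $g(\ell_i)$, the conclusion $f(\ell_i)g(\ell_i)\in M$ carries the induction forward; for $i=k$ it additionally gives $g(\ell_k)v\notin M$. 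Applying $(\ddagger)$ to every gadget (with the apex hypothesis supplied inductively along each chain, starting from $a(\ell_1)=u$) yields item~\ref{uno} and item~\ref{tre} together with the ``no evicted edge'' half of item~\ref{due}, and the ``no consistency edge'' half then follows immediately from Lemma~\ref{no consistency edges}.

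The heart of the proof is $(\ddagger)$, a finite case analysis inside a single gadget discarding every placement of $M$ that creates a forbidden substructure in $G_M$. In a positive gadget $f(\ell)$ has $g(\ell)$ as its only neighbor, so $f(\ell)g(\ell)\notin M$ would leave $f(\ell)$ $M$-exposed with $f(\ell)g(\ell)$ a $(+,+)$ edge (recall $f(\ell)$ is $g(\ell)$'s favorite), contradicting Theorem~\ref{thr:characterize-popular}(iii); once $f(\ell)g(\ell)\in M$, using that $a(\ell)$ is matched outside its gadget and that $L(c(\ell),e(\ell))=1$, $L(c(\ell),b(\ell))=|\Gamma(c(\ell))|$, one checks that $M$ must induce a perfect matching on $\{b(\ell),c(\ell),d(\ell),e(\ell)\}$ --- that is, $\mathcal{F}(H(\ell))\subseteq M$ or $\mathcal{T}(H(\ell))\subseteq M$ --- because any other placement exposes one of those four vertices and produces a $(+,+)$ edge reachable along an $M$-alternating path (in the sub-case where $c(\ell)$ would be matched along a consistency edge one uses that $e(\ell)$ is $c(\ell)$'s favorite). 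The negative gadget is the delicate case: arguing as in the proof of Lemma~\ref{no consistency edges}, having $e(\ell)f(\ell)\in M$ or the evicted edge $e(\ell)h(\ell)\in M$ would make $h(\ell),e(\ell),f(\ell),g(\ell)$ an $M$-alternating path starting at the $M$-exposed vertex $h(\ell)$ and containing the $(+,+)$ edge $f(\ell)g(\ell)$, a contradiction; hence $e(\ell)h(\ell)\notin M$ and $f(\ell)g(\ell)\in M$, after which the matching on $\{b(\ell),c(\ell),d(\ell),e(\ell)\}$ is forced to be $\mathcal{F}(N(\ell))$ or $\mathcal{T}(N(\ell))$ by the same exposure-plus-$(+,+)$-edge argument.

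For item~\ref{quattro} I would argue by contradiction. Suppose $\ell$ is a positive and $\ell'$ a negative occurrence of a common variable with $\mathcal{T}(H(\ell))\subseteq M$ and $\mathcal{T}(N(\ell'))\subseteq M$. Then $c(\ell)$ is matched to $b(\ell)$, its worst partner ($L(c(\ell),b(\ell))=|\Gamma(c(\ell))|$), and $c(\ell')$ is matched to $e(\ell')$, which is its worst partner in the canonical preference pattern of the negative gadget. By item~\ref{due} the consistency edge $c(\ell)c(\ell')$ is not in $M$, and since $c(\ell)$ and $c(\ell')$ each strictly prefer the other to their $M$-partner, $c(\ell)c(\ell')$ is a $(+,+)$ edge. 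By item~\ref{tre}, $c(\ell')e(\ell')\in M$ while $e(\ell')h(\ell')\notin M$ leaves $h(\ell')$ $M$-exposed, and $c(\ell)b(\ell)\in M$, so
\[
h(\ell')-e(\ell')-c(\ell')-c(\ell)-b(\ell)
\]
is an $M$-alternating path in $G_M$ (the edge $h(\ell')e(\ell')$ survives in $G_M$ since $h(\ell')$ is $M$-exposed) starting at the $M$-exposed vertex $h(\ell')$ and containing the $(+,+)$ edge $c(\ell')c(\ell)$, contradicting Theorem~\ref{thr:characterize-popular}(iii). The step I expect to be hardest is $(\ddagger)$ for the negative gadget --- ruling out the evicted edge and forcing a perfect matching on $\{b(\ell),c(\ell),d(\ell),e(\ell)\}$ without yet knowing that consistency edges are absent --- and, secondarily, verifying that the canonical preference pattern really does make $c(\ell)c(\ell')$ a $(+,+)$ edge in the situation of item~\ref{quattro} (if not, item~\ref{quattro} instead needs a slightly longer $M$-alternating path, still rooted at the exposed vertex $h(\ell')$).
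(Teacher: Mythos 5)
Your overall architecture (propagate along each clause chain starting from $tu\in M$, finish with the four-cycle case analysis, and get item~\ref{quattro} from the $(+,+)$ consistency edge plus the path rooted at the exposed $h(\ell')$) is the paper's architecture, and your item~\ref{quattro} argument is essentially identical to the paper's (the extra terminal edge $c(\ell)b(\ell)$ is harmless; the paper's path stops at $c(\ell)$). But the two steps you yourself flag as the crux are exactly where your sketch breaks. First, in a positive gadget $f(\ell)$ is \emph{not} a degree-one vertex: it is adjacent to both $d(\ell)$ and $g(\ell)$ (this is visible in the alternating paths of Section~\ref{sec:reduction} and in Claim~\ref{in positive literal f not in P}). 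So from $f(\ell)g(\ell)\notin M$ you cannot conclude $f(\ell)$ is exposed; the case $f(\ell)d(\ell)\in M$ is the whole content of part~\ref{uno} for positive gadgets, and the paper closes it by forcing $e(\ell)c(\ell)\in M$ and then $a(\ell)b(\ell)\in M$, which contradicts either $tu\in M$ or the minimality of the violating literal along the chain (in your induction, the apex hypothesis). Your proposal simply skips this case.

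Second, your one-line exclusion of the evicted edge is invalid: if $e(\ell)h(\ell)\in M$ then $h(\ell)$ is \emph{matched}, so the path $h(\ell),e(\ell),f(\ell),g(\ell)$ does not start at an $M$-exposed vertex, and moreover $f(\ell)g(\ell)$ is then forced \emph{into} $M$ (else $f(\ell)$ is exposed next to the $(+,+)$ edge $f(\ell)g(\ell)$), so it is not an unmatched $(+,+)$ edge; Theorem~\ref{thr:characterize-popular}(iii) gives no contradiction. Indeed, with $e(\ell)h(\ell)\in M$ the configuration $b(\ell)d(\ell)\in M$ and $c(\ell)$ matched along a consistency edge is locally unobjectionable inside $N(\ell)$; the paper's part~\ref{due} rules it out only by chasing \emph{across} the consistency edge into the positive gadget at the other end, forcing $d(\ell')e(\ell')\in M$ and $a(\ell')b(\ell')\in M$ there and contradicting part~\ref{uno} for that gadget. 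This cross-gadget step (which also explains why the paper proves part~\ref{uno} for all gadgets first, and only then parts~\ref{due} and~\ref{tre}) is missing from your purely gadget-local induction $(\ddagger)$, so as written the ``no evicted edge'' claim, and with it Lemma~\ref{no consistency edges} and your derivation of item~\ref{tre}, does not go through.
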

\begin{proof}

\ref{uno}. We consider negative and positive gadgets separately. Note that it is enough to prove the statement for the gateway, since the one for the apex follows by the fact that each apex is either also a gateway, or $u$ (that is matched to its favorite partner $t$ by hypothesis).

Pick first a negative clause and one of its literals $\ell$. We omit the dependency on $\ell$ in nodes of $N(\ell)$. The favorite partner of $g$ is $f$. Hence, suppose by contradictiong $fg \notin M$. Note that since $g$ is also $f$'s favorite partner, $fg$ is a $(+,+)$ edge, which implies $fe \in M$, else $f$ is an unmatched node adjacent to a $(+,+)$ edge starting from an unmatched node, a contradiction. Edge $eh$ is therefore labeled $(-,+)$. The path $h,e,f,g$ is therefore an $M$-alternating path to a $(+,+)$ edge, a contradiction concluding the proof for this case.

Consider any positive clause, and take, among its literals, the first one that violates the thesis. Call this literal $\ell$. Again, we omit the dependency on $\ell$ in nodes of $H(\ell)$, and observe that $f$ is $g$'s favorite partner. Therefore, $fd \in M$, else $f$ is unmatched and $fg$ is a $(+,+)$ edge adjacent to an unmatched node, a contradiction. If $e$ is unmatched, then $de$ is a $(+,+)$ adjacent to an unmatched node, again a contradiction. Since $\Gamma(e)=\{d,c\}$ and $df \in M$, we conclude $ec \in M$. A similar argument (with $e$ replaced by $b$) implies $ab \in M$. Now, if $\ell$ is the first literal of the clause, $a\equiv u$, contradicting $tu \in M$. Else, $a\equiv g(\ell')$ for the literal $\ell'$ preceding $\ell$ in the clause, contradicting the choice of $\ell$.

\smallskip

\ref{due}. 	Suppose by contradiction $M$ contains an evicted edge. Pick any clause with a negative literal that contains an evicted edge, and let $N(\ell)$ be the gadget associated to the first such literal. We omit the dependency on $\ell$ in nodes of $N(\ell)$. Part~\ref{uno} implies that $a$ is matched to its favorite partner. The hypothesis implies that $eh \in M$. Hence, if $bd\notin M$, then $be$ is a $(+,+)$ edge incident to the unmatched node $b$, a contradiction. Hence $bd \in M$. Since $a,d,e$ are not matched to $c$ in $M$, if $c$ is not matched through a consistency edge it is unmatched, and $ce$ is $(+,+)$ edge, a contradiction.

Hence $cc(\ell') \in M$ for some $H(\ell')$. Arguments similar to those above imply $d(\ell')e(\ell') \in M$ and $a(\ell')b(\ell') \in M$, contradicting part~\ref{uno}. This concludes the proof that $M$ does not contain any evicted edge. The fact that $M$ does not contain any consistency edge immediately follows from what just proved and Lemma~\ref{no consistency edges}.

\smallskip

\ref{tre}. Fix a gadget $G(\ell)$ and omit dependency on $\ell$ in nodes. By part~\ref{uno} and~\ref{due}, $a$ is matched to its first choice, and $M$ does not contain any consistency or evicted edges. Hence, we can restrict our attention to the subgraph induced by $b,c,d,e$. Simple case checking shows that, if any of those is unmatched, then there is a $(+,+)$ edge incident to an unmatched node, a contradiction to popularity. Hence, each of these vertices must be matched, and since they form a cycle with $4$ vertices, one of the possibilities from the thesis of the lemma must hold.

\smallskip

\ref{quattro}. 	Suppose we have both $\mathcal{T}(H(\ell))\subseteq M$ and $\mathcal{T}(N(\ell'))\subseteq M$. Then the consistency edge $c(\ell)c(\ell')$ is $(+,+)$. Hence, $h(\ell'),e(\ell'),c(\ell'),c(\ell)$ is an $M$-alternating path from an unmatched node to a $(+,+)$ edge, a contradiction. 

\end{proof}

\subsection{From popular matchings to feasible assignments}

Let $M$ be a popular matching $M$ of $(G,<)$ that does not contain edges in $F=\{st,wx\}$. Usual arguments imply $tu, xy, vw \in M$. Observe that $wx$ is a $(+,+)$ edge and $s$ is unmatched.

	
	We claim that each clause contains at least one literal $\ell$ with $T(\ell)\subseteq M$. We show that, if this does not happen, then there exists an $M$-alternating path from $s$ to $x$ passing through $wx$ (in the following, we say that such a path \emph{goes from $s$ to $wx$}). Together with the fact that $wx$ is a $(+,+)$ edge and $s$ is unmatched, this contradicts the popularity of $M$. The construction of those paths can be followed in Figure~\ref{fig:alt path positive and negative clauses}. We assume that the clause is given by exactly three positive literals. Similar arguments apply for the case with less than three positive literals. Recall that, by Lemma~\ref{lem: 3SAT-popular}, part 1, apexes and gateways are matched to their favourite partner.
	
	Suppose first we deal with a positive clause, and let $H(\ell_1), H(\ell_2), H(\ell_3)$ be the corresponding gadgets. By Lemma~\ref{M contains (t,u), apex is matched to its first choice}, part 3, $\mathcal{F}(\ell_1), \mathcal{F}(\ell_2), \mathcal{F}(\ell_3) \subseteq M$. Suppose first $H(\ell)$ is a positive gadget. Then $st$, $a(\ell)b(\ell)$, $d(\ell)f(\ell)$, $g(\ell_3)v$ are $(+,-)$ or $(-,+)$ edges for $\ell=\ell_1,\ell_2,\ell_3$, and the promised $M$-alternating path is:
$$s,t,u,b(\ell_1),d(\ell_1),f(\ell_1),g(\ell_1),b(\ell_2),d(\ell_2),
f(\ell_2),g(\ell_2),b(\ell_3),d(\ell_3),f(\ell_3),g(\ell_3),v,w,x$$
Consider now a clause that is given by three negative literals: $N(\ell_1), N(\ell_2), N(\ell_3)$. Then $st$, $a(\ell)b(\ell)$, $e(\ell)f(\ell)$, $g(\ell_3)v$ are $(+,-)$ or $(-,+)$ edges for $\ell=\ell_1,\ell_2,\ell_3$, and the the alternating path is $$s,t,u,b(\ell_1),e(\ell_1),f(\ell_1),g(\ell_1),b(\ell_2),e(\ell_2),
f(\ell_2),g(\ell_2),b(\ell_3),e(\ell_3),f(\ell_3),g(\ell_3),v,w,x,y.$$

\begin{figure}
\begin{center}
\includegraphics[scale=0.85]{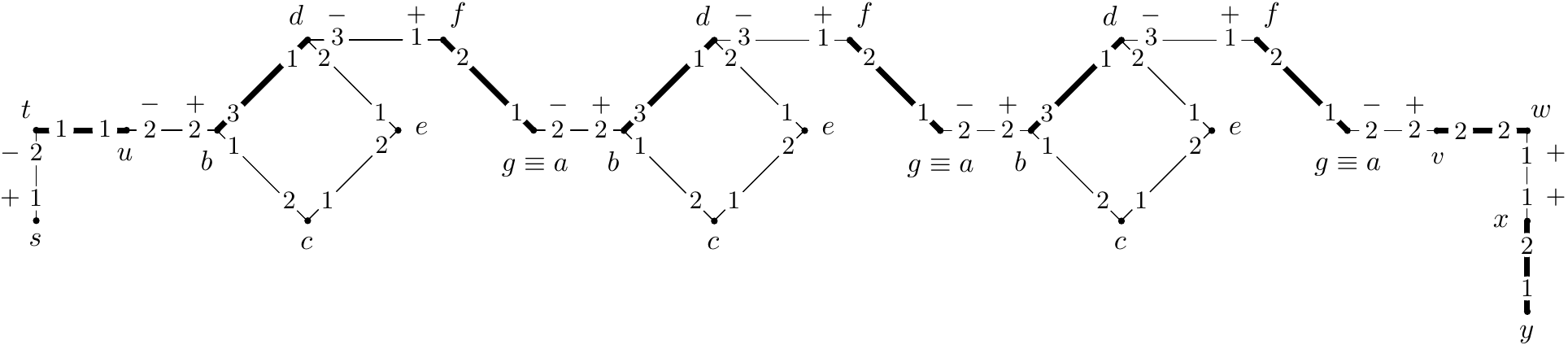}\hspace{1cm}
\includegraphics[scale=0.85]{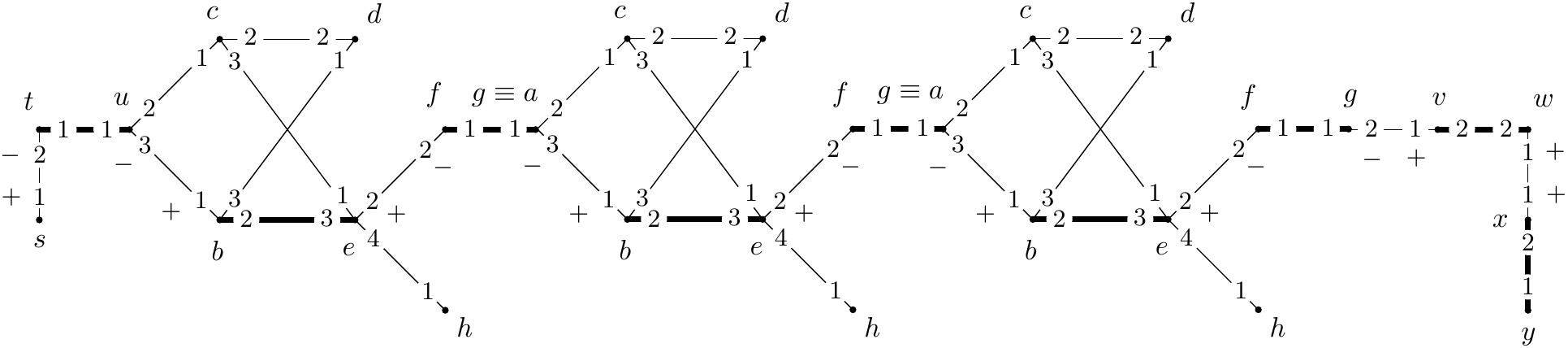}
\end{center}
\caption{An alternating path from $s$ to $w$ when all literals are False in positive (above) and negative (below) clauses.}
\label{fig:alt path positive and negative clauses}
\end{figure}
	
From $M$, we obtain an assignment of the literals for $3-SAT$ as follows: if $\mathcal{T}(\ell) \subseteq M$ for some positive (resp. negative) occurrence $\ell$ of variable $x_i$, then set $x_i$ to be true (resp. false). Else, set $x_i$ to true or false arbitrarily. Observe first that the assignment is well-defined, since we cannot assign the same variable to simultaneously true and false, because of Lemma~\ref{M contains (t,u), apex is matched to its first choice}, part~\ref{quattro}. Since by the first part of the proof, in each clause there is at least an $\ell$ such that $T(\ell)\subseteq M$, we conclude that $\psi$ is satisfied by the assignment we constructed.	

\subsection{From feasible assignments to popular matchings}

	Now suppose that we have an assignment of the literals which satisfies $\psi$. Construct a matching $M$ as follows. First, add edges $tu, vw, xy$, and edges $f(\ell)g(\ell)$ for each literal $\ell$. Recall that $f(\ell)$ is the favourite partner of $g(\ell)$. Then, for each positive literal $\ell$, add $\mathcal{T}(\ell)$ to $M$ if the corresponding variable is set to true; add $\mathcal{F}(\ell)$ to $M$ if the corresponding variable is set to false. Last, for each negative literal $\ell$, add $\mathcal{T}(\ell)$ to $M$ if the corresponding variable is set to false; add $\mathcal{F}(\ell)$ to $M$ if the corresponding variable is set to true. Figure~\ref{fig:assignment} gives an example of a matching $M$ corresponding to a feasible assignment of the clause from Figure~\ref{fig:example}. Note that $M$ does not contain $st$ and $wx$, hence it does not contain edges in $F$. We will show that $M$ is popular by proving it satisfies the conditions of Theorem~\ref{thr:characterize-popular}. This concludes the proof.
	
\begin{figure}
\begin{center}
\includegraphics[scale=0.85]{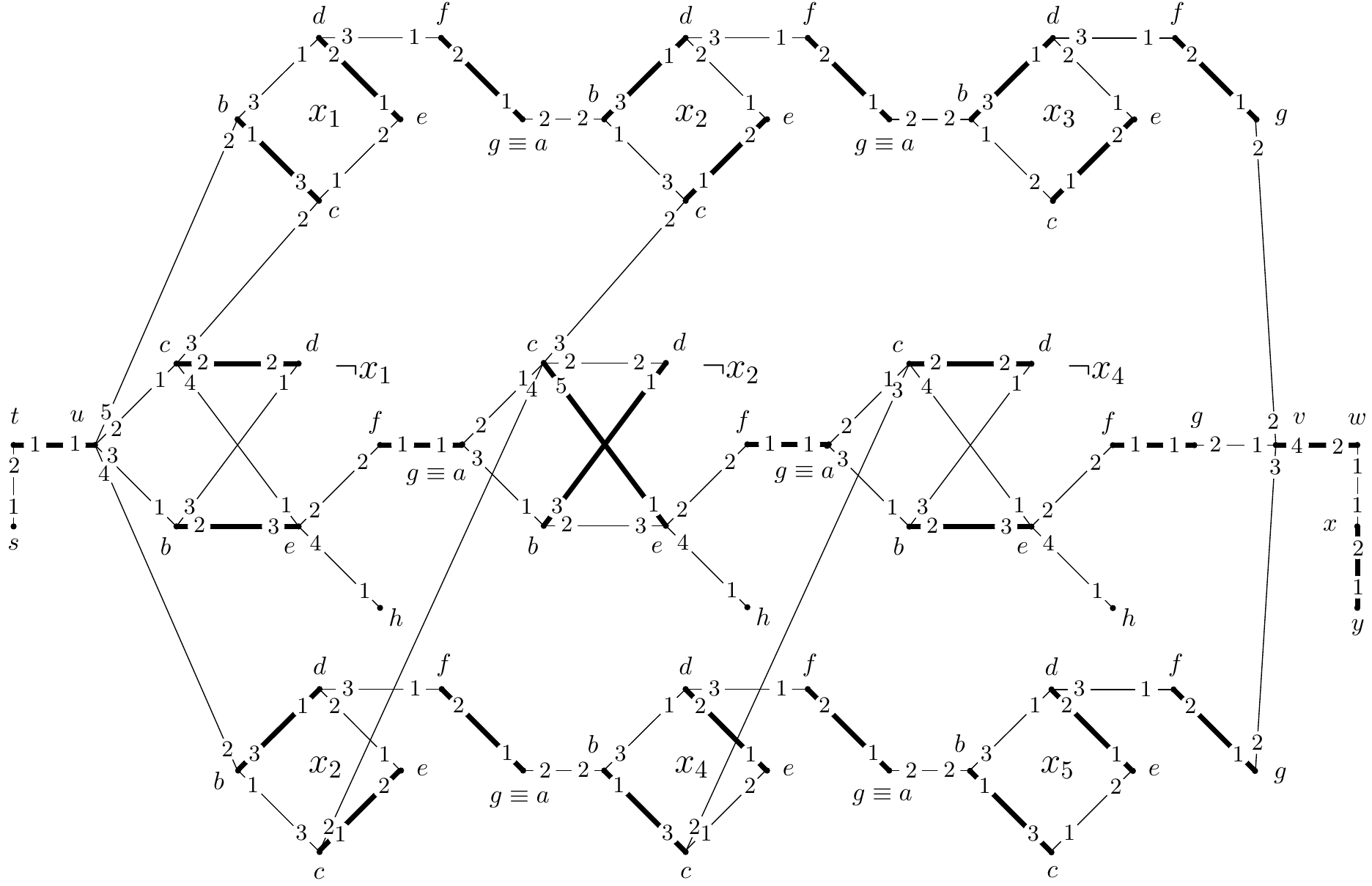}
\end{center}
\caption{A matching $M$ (bold edges) corresponding to a feasible assignment of $\psi = (x_1 \lor x_2 \lor x_3) \land (\neg x_1 \lor \neg x_2 \lor \neg x_4) \land (x_2 \lor x_4 \lor x_5)$ from Figure~\ref{fig:example}: $x_1,x_4,x_5$ are set to True and $x_2,x_3$ to False.}
\label{fig:assignment}
\end{figure}	

Define an $M$-alternating path $P$ to be \emph{malicious} if it starts from an unmatched node and contains a $(+,+)$ edge. Observe that, by construction, the only unmatched nodes are $s$, $h(\ell)$ for all negative literals $\ell$. The following claim caracterizes the only $(+,+)$ edge.

\begin{claim} \label{claim: no consistency edge is (+,+)}
The only edge labelled $(+,+)$ is $wx$.
\end{claim}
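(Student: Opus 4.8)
The statement to establish is that in the matching $M$ constructed from a satisfying assignment, the only edge labelled $(+,+)$ is $wx$. The plan is to go through every edge of $G$ not in $M$ and argue it cannot be $(+,+)$, exploiting the explicit preference pattern $L$ fixed in the previous subsection. I would organize the edges into groups: (a) the ``backbone'' edges $E_1=\{st,tu,vw,wx,xy\}$; (b) the edges internal to each gadget $G(\ell)$; (c) the consistency edges $E_2$; (d) the edges connecting $u$ and $v$ to the gadgets (the $ub(\ell)$, $uc(\ell)$, $g(\ell)v$ edges).

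\textbf{Key steps.} First, for the backbone: since $tu,vw,xy\in M$ and $st,wx\notin M$, I would check directly from $L$ that $s$ prefers $t$ (its only neighbour) while $t$ prefers $u$ over $s$ (as $L(t,u)=1<2=L(t,s)$), so $st$ is $(+,-)$ or $(-,+)$ but not $(+,+)$; similarly $w$ prefers $v$? No — $L(w,x)=1<2=L(w,v)$, so $w$ prefers $x$ to $v$, and $x$ prefers $w$ to $y$ since $L(x,w)=1<2=L(x,y)$, hence indeed $wx=(+,+)$, which is the one allowed edge. Second, for gadget-internal edges: in each positive gadget exactly one of $\mathcal{T}(H)=\{bc,de\}$ or $\mathcal{F}(H)=\{bd,ce\}$ lies in $M$ together with $fg\in M$ and the apex matched to its favourite; I would consult Figure~\ref{fig:gadget-literals} and the canonical pattern $I_H$ to verify that in both the $\mathcal{T}$ and $\mathcal{F}$ configurations, each non-matching edge inside the gadget has at least one endpoint preferring its $M$-partner (e.g., $d$ prefers $f$ to $b,c,e$, so any edge at $d$ other than $df$ is $(\cdot,-)$ at $d$), and do the analogous check for negative gadgets including the evicted edge $eh$. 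Third, for consistency edges $c(\ell)c(\ell')$: by Lemma~\ref{no consistency edges} or directly, $c(\ell)$ prefers $e(\ell)$ (position $1$) over any consistency edge, and in the $\mathcal{F}(H)$ configuration $c(\ell)e(\ell)\notin M$ but $c(\ell)$ prefers $e(\ell)$ to $c(\ell')$; in the $\mathcal{T}(H)$ configuration $c(\ell)$ is matched to $b(\ell)$ which sits at the \emph{last} position — so I must be careful here. The saving grace is Lemma~\ref{M contains (t,u), apex is matched to its first choice}, part~\ref{quattro}: we cannot have both $\mathcal{T}(H(\ell))$ and $\mathcal{T}(N(\ell'))$ in $M$, and for a consistency edge to be $(+,+)$ both its endpoints would need their $M$-partner to be worse than the other endpoint, which forces exactly that forbidden double-$\mathcal{T}$ situation. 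I would spell this out carefully. Fourth, the $u$- and $v$-incident edges: $u$ is matched to $t$, its favourite ($L(u,t)=1$), so every other edge at $u$ is $(-,\cdot)$; and $v$ is matched to $w$, with $L(v,w)=|\Gamma(v)|$ the \emph{worst} position, so $v$ prefers every $g(\ell)$ to $w$ — but then I would use that each $g(\ell)$ that is last in a clause is matched to $f(\ell)$, its favourite ($L(g(\ell),f(\ell))=1<2=L(g(\ell),v)$), so $g(\ell)v$ is $(\cdot,-)$ at $g(\ell)$, hence not $(+,+)$. Likewise $xy$: $x$ matched to $w$ (favourite) kills edge $xy$ at $x$, and $y$'s only neighbour is $x$.

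\textbf{Main obstacle.} The delicate point is the consistency edges in the case where one side is in the $\mathcal{T}$ configuration, because there the consistency enforcer $c$ is matched to $b$, which it ranks last among its neighbours, so $c$ does strictly prefer the other endpoint of the consistency edge. The only way to rule out $(+,+)$ there is to observe that the \emph{other} endpoint $c(\ell')$ of the consistency edge would then \emph{also} have to prefer $c(\ell)$ to its $M$-partner, which by the same ranking analysis forces $c(\ell')$ to be matched to $b(\ell')$ as well, i.e.\ both gadgets are in the $\mathcal{T}$ configuration — exactly what part~\ref{quattro} of Lemma~\ref{M contains (t,u), apex is matched to its first choice} forbids for gadgets sharing a variable (and consistency edges only join such gadgets). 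Making this argument airtight — in particular double-checking from the explicit bijections defining $L(c(\ell),c(\ell'))$ and $L(c(\ell'),c(\ell))$ that $b(\ell)$ (resp.\ $b(\ell')$) really is ranked last by $c(\ell)$ (resp.\ $c(\ell')$) so that ``$c(\ell)$ prefers $c(\ell')$ to its $M$-partner'' genuinely implies $c(\ell)b(\ell)\in M$ — is where I expect to spend the most care. Everything else reduces to reading off positions from the canonical preference patterns and the supplementary assignments, which is routine once the case split on the gadget configurations is set up. I would conclude by noting this completes the proof of the claim.
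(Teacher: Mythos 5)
Your overall strategy matches the paper's: a case check of every non-matching edge class (backbone, gadget-internal, apex/gateway-incident, consistency) against the explicit preference pattern, with the consistency edges as the delicate case. However, at exactly that delicate point your argument has a genuine logical gap: you rule out the ``double-$\mathcal{T}$'' situation by invoking Lemma~\ref{M contains (t,u), apex is matched to its first choice}, part~\ref{quattro}. That lemma's hypothesis is that $M$ is a \emph{popular} matching containing $tu$, but at this stage of the argument the popularity of $M$ is precisely what is being established (the claim is a step in verifying the conditions of Theorem~\ref{thr:characterize-popular} for the constructed $M$). So the citation is circular and cannot be used as stated.

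The correct justification, and the one the paper uses, does not need popularity at all: it comes straight from the construction of $M$ out of a truth assignment. A consistency edge $c(\ell)c(\ell')$ joins a positive occurrence $\ell$ and a negative occurrence $\ell'$ of the \emph{same} variable; by construction $\mathcal{T}(H(\ell))\subseteq M$ exactly when that variable is true, while $\mathcal{T}(N(\ell'))\subseteq M$ exactly when it is false, so exactly one of $\{\mathcal{T}(H(\ell)),\mathcal{F}(N(\ell'))\}$, $\{\mathcal{T}(N(\ell')),\mathcal{F}(H(\ell))\}$ lies in $M$, and in the $\mathcal{F}$-configured gadget the consistency enforcer is matched to a partner it prefers to any consistency edge, giving a $-$ label on that side. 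With this one-line replacement your proof closes the gap and becomes essentially the paper's argument; the remaining edge checks you outline (backbone, apexes and gateways matched to favorites, gadget-internal cases by configuration) are the same routine verification the paper performs, modulo reading the exact ranks off the canonical patterns in Figure~\ref{fig:gadget-literals}.
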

\begin{proof}
Consider first a consistency edge $c(\ell)c(\ell')$, which connects a positive gadget $H(\ell)$ and a negative gadget $N(\ell')$. By construction, one of the following two cases happen: ${\cal T}(H(\ell))$, ${\cal F}(H(\ell')\subseteq M$, or ${\cal T}(H(\ell'))$, ${\cal F}(H(\ell)\subseteq M$. One easily checks that $c(\ell)c(\ell') = (-,+)$ or $c(\ell)c(\ell') = (+,-)$. 

Notice that $st = (+,-)$, and all edges of $E(G)\setminus M$ incident to an apex or a gateway have at least a label equal to $-$ since, by construction, those nodes are matched to their favourite partners. Hence, we are left with edges between nodes $b,d,c,e,f$ and possibly $h$ of the same gadget $G(\ell)$ (we omit dependency on $\ell$ in nodes). Suppose first $\ell$ is a positive literal. If $\ell$ is true, then $b,e$ are matched to their favorite partners. If $\ell$ is false, $c,d$ are matched to their favorite partners. Hence, no 
$(+,+)$ edge is incident to those. In all cases, $df$ is a $(-,+)$ edge. Suppose now $\ell$ is a negative literal. If $\ell$ is true, then $e,c$ are matched to their favorite partners, concluding the proof. Else, $cd, be \in M$, and then $db,ce,eh,fe$ are $(-,+)$ edges.\end{proof}


Claim~\ref{claim: no consistency edge is (+,+)} implies that conditions $(i)$ and $(ii)$ of Theorem~\ref{thr:characterize-popular} are satisfied.

\begin{claim}\label{cl:each clause contains true literal}
If a positive clause contains a true literal $\ell$, then $a(\ell),b(\ell)=(-,-)$. If a negative clause contains a true literal $\ell'$, then $e(\ell')f(\ell')=(-,-)$.
\end{claim}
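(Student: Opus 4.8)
The plan is to establish Claim~\ref{cl:each clause contains true literal} by a direct case analysis on the label of the indicated edge, exploiting the explicit preference pattern $L$ fixed earlier and the structure of the matching $M$ built from the satisfying assignment. Recall that both $a(\ell)$ and $g(\ell)$ are matched to their favourite partners in $M$ (the apex to $f(\ell')$ for some preceding literal or to $t$ via $tu$, and the gateway to $f(\ell)$). So for the edge $a(\ell)b(\ell)$ the label at the $a(\ell)$-endpoint is automatically $-$, and to show $a(\ell)b(\ell)=(-,-)$ it suffices to show that $b(\ell)$ also prefers its $M$-partner to $a(\ell)$; symmetrically for $e(\ell')f(\ell')$, the $f(\ell')$-endpoint is $-$ (since $g(\ell')$'s favourite is $f(\ell')$, hence $f(\ell')g(\ell') \in M$ and $f(\ell')$ is matched to $g(\ell')$), so I only need the $e(\ell')$-endpoint to be $-$ as well.

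First I would treat the positive case: $\ell$ is a true positive literal in a positive clause. By the construction of $M$, since the variable is set to true we have $\mathcal{T}(H(\ell))=\{b(\ell)c(\ell), d(\ell)e(\ell)\}\subseteq M$, so $b(\ell)$ is matched to $c(\ell)$. Consulting the canonical incomplete preference pattern $I_H$ of the positive gadget (Figure~\ref{fig:gadget-literals}, left), together with the rule $L(a(\ell),b(\ell))=2$ for non-first literals (and the bijective assignment of $L(u,b(\ell))$ into $\{2,\dots\}$ for first literals), I would verify that $b(\ell)$ prefers $c(\ell)$ to $a(\ell)$ — i.e. that in $<_{b(\ell)}$, $c(\ell)$ appears before $a(\ell)$. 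This is exactly where the gadget was designed so that $b$'s favourite is $c$ (or at least $b$ prefers $c$ to $a$), making $a(\ell)b(\ell)$ a $(-,-)$ edge precisely when $\ell$ is true. Combined with the $a(\ell)$-endpoint being $-$, this gives $a(\ell)b(\ell)=(-,-)$.

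Then I would handle the negative case: $\ell'$ is a true negative literal in a negative clause. Since the underlying variable is then set to false, the construction puts $\mathcal{T}(N(\ell'))=\{c(\ell')e(\ell'), b(\ell')d(\ell')\}\subseteq M$, so $e(\ell')$ is matched to $c(\ell')$. Again reading off $I_N$ (Figure~\ref{fig:gadget-literals}, right) and recalling $L(a(\ell'),c(\ell'))=2$, $L(a(\ell'),b(\ell'))=3$ and $L(c(\ell'),e(\ell'))=1$ etc., I would check that $e(\ell')$ prefers its partner $c(\ell')$ to $f(\ell')$, so the $e(\ell')$-endpoint of $e(\ell')f(\ell')$ is labelled $-$; together with $f(\ell')$ being matched to its favourite $g(\ell')$, this yields $e(\ell')f(\ell')=(-,-)$.

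The only subtlety — and the step I expect to be mildly error-prone rather than genuinely hard — is reading the gadget preference orders correctly from the figures and reconciling them with the ad hoc numerical assignments fixed in the definition of $L$, since $b(\ell)$, $c(\ell)$, $e(\ell)$ each have several neighbours (including $u$, consistency edges, or $h(\ell)$) and one must be sure the relevant pair is ordered the right way regardless of those extra neighbours. I would organize this by noting that the extra neighbours were all assigned preference values strictly worse than the gadget-internal edge in question (this is precisely what the bijections "into $\{2,\dots\}$" and "into $\{3,\dots\}$" guarantee), so the comparison reduces to the canonical intra-gadget order, which is exactly what Figure~\ref{fig:gadget-literals} encodes. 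No alternating-path argument or appeal to Theorem~\ref{thr:characterize-popular} is needed here; the claim is purely a bookkeeping statement about edge labels that feeds into the later verification of conditions $(iii)$.
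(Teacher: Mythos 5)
Your proof is correct and takes essentially the same route as the paper's one-line argument, which observes that the hypothesis forces $\mathcal{T}(H(\ell))\subseteq M$ (resp.\ $\mathcal{T}(N(\ell'))\subseteq M$) and then reads off both $-$ labels: the apex/gateway endpoint is matched to its favourite partner, and $b(\ell)$ (resp.\ $e(\ell')$) is matched to its favourite $c(\ell)$ (resp.\ $c(\ell')$). One minor slip worth noting: the values $L(a(\ell),b(\ell))=2$ and the bijection for $L(u,b(\ell))$ describe the \emph{apex's} ranking of $b(\ell)$, which is irrelevant to the $b$-side label; the needed comparisons ($b$ ranks $c$ above $a$, $e$ ranks $c$ above $f$) come from the canonical gadget pattern, as you correctly fall back on.
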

\begin{proof}
Consider $H(\ell)$ and $N(\ell')$. By hypothesis,  $b(\ell)c(\ell), d(\ell)e(\ell) \in M$ and $c(\ell')e(\ell'),\break b(\ell')d(\ell') \in M$. This implies that $a(\ell)b(\ell)=(-,-)$ and $e(\ell')f(\ell')=(-,-)$.
\end{proof}

\begin{claim} \label{claim: no malicious path in a clause}
For any clause $c$, there is no malicious path from $s$ to $wx$ that is contained in $\{s,t,u,v,w,x\}\cup \bigcup_{\ell \in c}V(\ell)$.
\end{claim}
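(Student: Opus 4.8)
The plan is to prove Claim~\ref{claim: no malicious path in a clause} by a careful case analysis on how an $M$-alternating path could traverse the subgraph $\{s,t,u,v,w,x\}\cup\bigcup_{\ell\in c}V(\ell)$. By Claim~\ref{claim: no consistency edge is (+,+)}, the only $(+,+)$ edge anywhere is $wx$, so a malicious path from $s$ must reach $wx$ along edges whose $M$-status alternates; moreover by Claim~\ref{cl:each clause contains true literal}, the clause $c$ (being satisfied by the assignment) contains a true literal $\ell^\star$, and the edge $a(\ell^\star)b(\ell^\star)$ (if $c$ is positive) or $e(\ell^\star)f(\ell^\star)$ (if $c$ is negative) is labelled $(-,-)$, hence has been deleted in $G_M$. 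The core idea is that this deleted edge disconnects the gadget chain of $c$ at the true literal, so no $M$-alternating path in $G_M$ can pass through the whole chain from $u$ to $v$.

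First I would fix the structure of the path: any malicious path from $s$ must begin $s,t,u,\dots$ because $st\notin M$ forces the first edge to be $st$ and then $tu\in M$ is forced, and similarly it must end $\dots,v,w,x$ (reaching $wx$) since $vw\in M$ and the only neighbors of $w$ are $v,x$. So such a path, restricted to the clause's subgraph, would have to travel from $u$ through the concatenation of gadgets $G(\ell_1),\dots,G(\ell_k)$ of the clause and out through $g(\ell_k)v$. I would then trace, gadget by gadget along this chain, the only possible $M$-alternating continuations. Using Lemma~\ref{M contains (t,u), apex is matched to its first choice} part 3 (each gadget is in state $\mathcal{T}$ or $\mathcal{F}$) and the explicit $M$ constructed in the previous subsection, within a single gadget $G(\ell)$ entered at its apex along a non-matching edge, the alternating path is forced: if $\ell$ is false the path goes $a(\ell),b(\ell),d(\ell),f(\ell),g(\ell)$ (positive) or $a(\ell),b(\ell),e(\ell),f(\ell),g(\ell)$ (negative) and exits at the gateway; if $\ell$ is true, the matching edges $b(\ell)c(\ell),d(\ell)e(\ell)$ (positive) resp.\ $c(\ell)e(\ell),b(\ell)d(\ell)$ (negative) are oriented so that, together with the deleted $(-,-)$ edge from Claim~\ref{cl:each clause contains true literal}, there is no way to alternate from the apex side through to $f(\ell)g(\ell)$ and onward to the gateway — the path gets stuck. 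I would also note the path cannot leave a gadget sideways via a consistency edge: $c(\ell)$ has a consistency neighbor only when $\ell$ is a literal of a variable appearing in another clause, and I must check that entering $c(\ell)$ from within the gadget along the correct (non-)matching parity and then exiting along the consistency edge is impossible given the labels established in Claim~\ref{claim: no consistency edge is (+,+)}; this stays within the stated subgraph only if one argues it must immediately come back, and the parity/label check rules it out.

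The key step is therefore: because $c$ is satisfied it has at least one true literal $\ell^\star$, and at $\ell^\star$ the chain is "broken" for $M$-alternating paths (the deleted $(-,-)$ edge plus the orientation of the two matching edges of $\mathcal{T}(\ell^\star)$ prevent passage from the apex-side to the gateway-side of $G(\ell^\star)$). Hence a would-be malicious path from $s$ to $wx$ confined to the clause's subgraph cannot get from $u$ past $\ell^\star$ to reach $v$, and so cannot reach $wx$ at all — contradiction, proving the claim.

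The main obstacle I expect is the bookkeeping of the true-literal gadget: I must verify rigorously that \emph{no} $M$-alternating walk (not just the "forward" one) can cross $G(\ell^\star)$ from $a(\ell^\star)$ to $g(\ell^\star)$ in $G_M$, handling both the positive and negative gadget, and in particular checking the role of the consistency edge at $c(\ell^\star)$ and the evicted edge $e(\ell^\star)h(\ell^\star)$ (with $h(\ell^\star)$ unmatched) — since $h(\ell^\star)$ is itself a possible start of a malicious path, I should be careful that the claim is about paths \emph{from $s$}, so that entry point is not relevant here, but the edge $e(\ell^\star)h(\ell^\star)$ could still be used as an internal edge of the path and its $(-,+)$ label must be accounted for. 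A secondary nuisance is the boundary cases where the clause has fewer than three literals, or where $\ell_1$'s apex is $u$ versus an internal gateway, but these follow the same template with trivial modifications.
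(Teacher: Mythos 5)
Your proposal is correct and takes essentially the same route as the paper: both arguments rest on Claim~\ref{cl:each clause contains true literal}, which gives a $(-,-)$ (hence deleted in $G_M$) edge $a(\ell^\star)b(\ell^\star)$ resp.\ $e(\ell^\star)f(\ell^\star)$ at a true literal of the clause, combined with the observation that any path confined to $\{s,t,u,v,w,x\}\cup\bigcup_{\ell\in c}V(\ell)$ must cross exactly that bottleneck edge in every gadget of the chain. The paper simply states this structural fact directly instead of tracing gadget states edge by edge; also note that your worry about leaving a gadget via a consistency edge is vacuous here, since monotone clauses never contain both a positive and a negative occurrence of a variable, so consistency edges always exit the clause's vertex set and are excluded by the claim's hypothesis.
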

\begin{proof}
Since we have an assignment of the literals which satisfies $\psi$, then each clause contains at least one true literal. According to Claim~\ref{cl:each clause contains true literal}, $a(\ell)b(\ell)=(-,-)$ and $e(\ell')f(\ell')=(-,-)$ for some literals $\ell$ and $\ell'$ in positive and negative clauses correspondingly. Since any path that satisfies the hypothesis with $c$ a positive (resp. negative) clause must pass through edges $a(\ell)b(\ell)$ (resp. $e(\ell)f(\ell)$) for all $\ell \in c$, the thesis follows. 
\end{proof}

\begin{claim} \label{no malicious path from h in a clause}
A malicious path starting at $h(\ell)$ does not traverse $f(\ell)$. In particular, there is no malicious path from $h(\ell)$ to $wx$ that is contained in $\{s,t,u,v,w,x\}\cup \bigcup_{\ell \in c}V(\ell)$ for any negative clause~$c$.
\end{claim}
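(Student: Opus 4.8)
The plan is to analyze the local structure of the negative gadget $N(\ell)$ around the unmatched node $h(\ell)$, using the labels established in Claim~\ref{claim: no consistency edge is (+,+)}, and show that no $M$-alternating path starting at $h(\ell)$ can reach $f(\ell)$. First I would recall that $h(\ell)$ is unmatched, $\Gamma(h(\ell)) = \{e(\ell)\}$, and the evicted edge $e(\ell)h(\ell)$ is not in $M$; hence any $M$-alternating path out of $h(\ell)$ must begin $h(\ell), e(\ell)$ along a non-matching edge, and then continue along the matching edge incident to $e(\ell)$. I would split into the two cases from part~\ref{tre} of Lemma~\ref{M contains (t,u), apex is matched to its first choice} (equivalently, the truth value of $\ell$). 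If $\ell$ is true (i.e.\ ${\cal T}(N(\ell)) \subseteq M$), then $c(\ell)e(\ell), b(\ell)d(\ell) \in M$, so after $h(\ell), e(\ell)$ the path must use the matching edge $e(\ell)c(\ell)$, reaching $c(\ell)$; from $c(\ell)$ the only non-matching edges available go to $b(\ell)$ (whose matching edge leads to $d(\ell)$, then to $a(\ell)$, which is matched to its favorite and so is a dead end in $G_M$) or along a consistency edge into a positive gadget. In neither branch does the path reach $f(\ell)$, because $f(\ell)$'s only neighbors in the gadget are $g(\ell)$ and $e(\ell)$, and $e(\ell)$ has already been visited.

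If instead $\ell$ is false (i.e.\ ${\cal F}(N(\ell)) \subseteq M$), then $c(\ell)d(\ell), b(\ell)e(\ell) \in M$, so after $h(\ell), e(\ell)$ the path must take the matching edge $e(\ell)b(\ell)$ to $b(\ell)$; from $b(\ell)$ the remaining non-matching edges lead to $d(\ell)$ or $a(\ell)$. The edge $a(\ell)b(\ell)$ — recall $a(\ell)$ is matched to its favorite partner $f(\ell')$ for the preceding literal, or $a(\ell)\equiv u$ matched to $t$ — is a dead end. Going to $d(\ell)$, the matching edge $d(\ell)c(\ell)$ leads to $c(\ell)$, and from there only a consistency edge remains (the edge $c(\ell)e(\ell)$ is non-matching but $e(\ell)$ is already visited); a consistency edge leaves the gadget into a positive gadget and cannot come back to $f(\ell)$ without revisiting a node. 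In either case $f(\ell)$ is never reached, giving the first sentence of the claim.

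For the "In particular" part, I would observe that the only way a malicious path contained in $\{s,t,u,v,w,x\} \cup \bigcup_{\ell' \in c} V(\ell')$ starting at $h(\ell)$ could reach $wx$ is by exiting the gadget $N(\ell)$, and — since within a clause the gadgets are connected in a chain through apex/gateway identifications, and consistency edges only lead to positive gadgets outside the clause — the only exit route from $N(\ell)$ staying inside the clause passes through $f(\ell)$ (to $g(\ell)$ and onward toward $v$) or through $a(\ell)$ (backward through the chain). The backward route through $a(\ell)$ is blocked because $a(\ell)$ is matched to its favorite partner, so every edge of $G_M$ leaving $a(\ell)$ other than its matching edge has label $(-,\cdot)$ and the matching edge of $a(\ell)$ points "outward"; more directly, by the first part the path cannot traverse $f(\ell)$, so it cannot reach $g(\ell)$ and hence cannot reach $v,w,x$. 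The main obstacle I anticipate is bookkeeping: carefully enumerating, in each of the two cases, every $M$-alternating continuation inside the gadget and confirming each one dead-ends before $f(\ell)$, while correctly handling the edges that leave the gadget (consistency edges and the apex/gateway identifications) so that the "contained in the clause" hypothesis is genuinely used. This is routine case analysis once the labels from Claim~\ref{claim: no consistency edge is (+,+)} are in hand, but it must be done exhaustively.
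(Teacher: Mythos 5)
Your core observation is exactly the one the paper's (much shorter) proof uses: the only neighbor of $h(\ell)$ is $e(\ell)$, the evicted edge is not in $M$, and $e(\ell)f(\ell)\notin M$ under either truth assignment, so after $h(\ell),e(\ell)$ the path must continue along the matching edge at $e(\ell)$ and can never use $e(\ell)f(\ell)$; since $f(\ell)$ has degree $2$, it cannot be traversed, and hence the path cannot exit toward $g(\ell),v,w,x$ while staying inside the clause. Had you stopped there, your argument would be essentially identical to the paper's two-line proof. The case analysis you build around it is not needed, and as written it contains concrete errors.

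Specifically, in the negative gadget $c(\ell)$ is \emph{not} adjacent to $b(\ell)$ and $d(\ell)$ is \emph{not} adjacent to $a(\ell)$: the internal $4$-cycle is $b,d,c,e$ and $a$ is adjacent only to $b$ and $c$ inside the gadget (your claimed edges would join two vertices on the same side of the bipartition), so your trace of the ``true'' case follows edges that do not exist. Moreover, the assertion that $a(\ell)$ is ``a dead end in $G_M$'' because it is matched to its favorite partner is not valid: being matched to its favorite only forces a $-$ on $a(\ell)$'s side of non-matching edges, which does not delete them from $G_M$ (that requires $(-,-)$), and in any case an alternating path arriving at $a(\ell)$ on a non-matching edge may always leave along the matching edge $a(\ell)f(\ell'')$ (or $ut$), so the backward route through the chain is genuinely open -- it simply never returns to $f(\ell)$ nor reaches $wx$ within the clause, which is what the degree-$2$ argument already gives you. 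For the ``in particular'' statement, the right justification is the one you give ``more directly'': within $\{s,t,u,v,w,x\}\cup\bigcup_{\ell'\in c}V(\ell')$ every route to $v$ (and hence to $wx$) must cross the edge $f g$ of some gadget of the clause, and peeling the chain back this forces traversal of $f(\ell)$, contradicting the first part; the appeal to ``blocking'' at $a(\ell)$ should be dropped. So the proposal is salvageable and its kernel agrees with the paper, but the detailed gadget-tracing must either be corrected against the actual adjacencies and labels (cf.\ Claim~\ref{claim: no consistency edge is (+,+)}) or, better, replaced by the short argument above.
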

\begin{proof}
$h(\ell)$ is unmatched and $e(\ell)f(\ell) \notin M$ regardless of the true or false assignment to literal~$\ell$. Hence $P$ cannot take both $h(\ell)e(\ell)$ and $e(\ell)f(\ell)$, since none of them is in $M$. Since $f(\ell)$ has degree $2$, $f(\ell)\notin P$. We conclude that there is no alternating path from $h(\ell)$ to $wx$ contained only in $c$.
\end{proof}

\begin{claim} \label{in positive literal f not in P}
Let $\ell$ be a positive literal, and let $P$ be a malicious path. Then $c(\ell)c(\ell')$ is an edge of $P$ for at most one $\ell'$ and if that happens $f(\ell) \notin P$.
\end{claim}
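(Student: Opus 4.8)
The plan is to analyze the local structure of a malicious path $P$ around a positive gadget $H(\ell)$, using the fact (established in the proof of Claim~\ref{claim: no consistency edge is (+,+)}) that inside $H(\ell)$ the only non-$(-,-)$ edge of $E(G)\setminus M$ is $d(\ell)f(\ell)$ (labeled $(-,+)$), together with the fact that the apex $a(\ell)$ and gateway $g(\ell)$ are matched to their favorite partners $f(\ell')$ and $f(\ell)$ respectively. First I would recall that $c(\ell)$ has degree exactly $3$ in $G$: its neighbors are $b(\ell)$, $e(\ell)$, and consistency partners $c(\ell')$. Since $M$ restricted to $H(\ell)$ is either $\mathcal T(H(\ell))=\{b(\ell)c(\ell),d(\ell)e(\ell)\}$ or $\mathcal F(H(\ell))=\{b(\ell)d(\ell),c(\ell)e(\ell)\}$, the vertex $c(\ell)$ is matched \emph{inside} the gadget; hence no consistency edge $c(\ell)c(\ell')$ is in $M$, and if $P$ uses such an edge it must enter/leave $c(\ell)$ via a matching edge, which is internal to $H(\ell)$.

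Next I would argue the ``at most one $\ell'$'' part: since $P$ is a path, it visits $c(\ell)$ at most once, and at a single visit it uses at most one non-matching edge incident to $c(\ell)$; as $c(\ell)c(\ell')$ is the only type of non-matching edge that leaves the gadget from $c(\ell)$, at most one such $\ell'$ can occur on $P$. Then I would turn to the second assertion. Suppose $c(\ell)c(\ell')\in P$. Following $P$ through $c(\ell)$, the matching edge it uses at $c(\ell)$ is either $b(\ell)c(\ell)$ (true case) or $c(\ell)e(\ell)$ (false case). In each case I would trace the forced continuation of $P$ inside $H(\ell)$: in the true case the next vertex is $b(\ell)$, and from $b(\ell)$ the only non-matching option other than going back is the $(-,-)$ edge $a(\ell)b(\ell)$, which $G_M$ does not contain, so $P$ terminates at $b(\ell)$; in the false case the next vertex is $e(\ell)$, and $e(\ell)$ has degree $2$ with neighbors $d(\ell)$ and $c(\ell)$, so $P$ must continue to $d(\ell)$ via the non-matching edge $d(\ell)e(\ell)$ — but wait, in the false case $d(\ell)e(\ell)\notin M$, so this is illegal as it would be two consecutive non-matching edges; hence $P$ terminates at $e(\ell)$. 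Either way $P$ never reaches $d(\ell)f(\ell)$ and in particular $f(\ell)\notin P$. (I would double-check the true/false labelings against the gadget figure to make sure which of $d(\ell)e(\ell)$, $c(\ell)e(\ell)$ is the matching edge in each case; this is the one place where a sign error could creep in.)

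The main obstacle I expect is bookkeeping the exact gadget edges and which are in $M$ under each assignment, since the argument is entirely a finite case check on a $7$-vertex gadget but must be done carefully so that every continuation of $P$ is accounted for — in particular making sure that entering $c(\ell)$ from the consistency side really does force $P$ to die inside $H(\ell)$ before reaching $f(\ell)$, rather than escaping through $b(\ell)$ or $e(\ell)$ by some edge I have overlooked. Since $b(\ell)$'s only neighbors are $a(\ell),c(\ell),d(\ell)$ and $e(\ell)$'s only neighbors are $c(\ell),d(\ell)$, and $a(\ell)b(\ell)=(-,-)$ (deleted in $G_M$) while the remaining options are all matching edges already used or lead back toward $c(\ell)$, the path genuinely has nowhere to go; establishing this exhaustively is the crux and the rest is routine.
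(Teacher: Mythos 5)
The first assertion (at most one consistency edge of $P$ at $c(\ell)$, because $c(\ell)$ is matched inside the gadget and an alternating path can use at most one unmatched edge at any vertex) is fine and is essentially the paper's argument. The problem is your proof of the second assertion: you only follow the branch of $P$ that leaves $c(\ell)$ along its matching edge and check that this branch dies inside $H(\ell)$. Even if that trace were correct, it would not give $f(\ell)\notin P$: the malicious path starts at $s$ or at some $h(\cdot)$ far from this gadget, and it could meet $f(\ell)$ on a completely different stretch, e.g.\ entering $H(\ell)$ through the gateway along the matching edge $g(\ell)f(\ell)$, or through $a(\ell),b(\ell),d(\ell)$, long before or after it uses the consistency edge. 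Nothing in your local trace excludes this. The paper's proof is structured exactly to avoid this gap: it assumes $f(\ell)\in P$, uses that $f(\ell)$ has degree two with $f(\ell)g(\ell)\in M$ to force $f(\ell)d(\ell)\in P$, and then lets alternation propagate from $d(\ell)$ until it collides with the edge $c(\ell)c(\ell')$ at $c(\ell)$ (two unmatched path edges meeting at $c(\ell)$), a contradiction no matter where $f(\ell)$ sits on $P$. You need an argument of this global form, or you must separately rule out every other way $P$ can reach $f(\ell)$.

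In addition, the local trace itself is flawed. In the true case you overlooked the edge $b(\ell)d(\ell)$: it is unmatched but not $(-,-)$ (in $\mathcal{T}(H(\ell))$ the vertex $d(\ell)$ is matched to $e(\ell)$ while $b(\ell)$ is its favorite), so after $c(\ell),b(\ell)$ the path may continue to $d(\ell)$ and $e(\ell)$; it does not terminate at $b(\ell)$. In the false case your parity is wrong: $c(\ell)e(\ell)\in M$, so continuing from $e(\ell)$ along the unmatched edge $e(\ell)d(\ell)$ is perfectly legal (it is not two consecutive unmatched edges), and the path may go on to $d(\ell)$ and $b(\ell)$; what actually blocks $f(\ell)$ on this branch is only that the path arrives at $d(\ell)$ on an unmatched edge and hence cannot also use the unmatched edge $d(\ell)f(\ell)$. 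Relatedly, your opening premise that inside $H(\ell)$ the only non-$(-,-)$ unmatched edge is $d(\ell)f(\ell)$ is not what the proof of Claim~\ref{claim: no consistency edge is (+,+)} establishes (it only shows there is no $(+,+)$ edge besides $wx$): for instance $c(\ell)e(\ell)$ is $(+,-)$ when $\ell$ is true, $b(\ell)c(\ell)$ is $(+,-)$ when $\ell$ is false, and the consistency edges are $(+,-)$ or $(-,+)$; these are precisely the edges your case check must account for.
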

\begin{proof}
The first statement follows from the definition of path. Now assume there is $\ell'$ such that $c(\ell)c(\ell') \in P$. We omit dependency on $\ell$ in nodes. By construction, $cc(\ell')\notin M$. Suppose by contradiction $f \in P$. Since $f$ has degree $2$, $fd$ is an edge of $P$. We have two possibilities. First, if $\ell$ is true, then $bc, de \in M$. Then $P$ or its inverse (i.e.~the path obtained traversing nodes from $P$ in opposite order) contains the subpath $f,d,e,c,c(\ell')$, a contradiction to the fact that $P$ is $M$-alternating. Second, if $\ell$ is false, then $bd, ce \in M$. Then $P$ or its inverse contains the subpath $f,d,b,c,c(\ell')$, again a contradiction to the fact that it is $M$-alternating. \end{proof}

\begin{claim} \label{claim: a(ell) is not in P}
Let $\ell$ be a negative literal, and let $P$ be a malicious path. Then $c(\ell),c(\ell')$ is an edge of $P$ for at most one $\ell'$ and if that happens and $a(\ell)\neq u$, then $a(\ell) \notin P$.
\end{claim}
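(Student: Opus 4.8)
The plan is to follow the proof of Claim~\ref{in positive literal f not in P} almost verbatim, with the degree-two vertex $f(\ell)$ used there replaced by the degree-two vertex $d(\ell)$ of the negative gadget, whose only neighbours in $G$ are $b(\ell)$ and $c(\ell)$ (this I would read off from the construction of $N(\ell)$ in Figure~\ref{fig:gadget-literals}, exactly as ``$f(\ell)$ has degree $2$'' is used in Claim~\ref{in positive literal f not in P}). The first statement is proved as there: $P$ is $M$-alternating and, by the construction of $M$, no consistency edge lies in $M$, so $c(\ell)$ is incident in $P$ to at most one consistency edge.

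Before the main argument I would record a structural fact about any malicious path $P$, already used implicitly in Claim~\ref{in positive literal f not in P}. Since $P$ contains a $(+,+)$ edge and, by Claim~\ref{claim: no consistency edge is (+,+)}, $wx$ is the only such edge, we have $wx\in P$; tracing from $wx$ along $w,x,y$, which have degree at most $2$ with $y$ of degree $1$ and $M$-covered, one checks that $w$ is not an endpoint of $P$ and that the endpoint of $P$ on the $wx$-side is $x$ or $y$, in particular $M$-covered. As $P$ also starts from an $M$-exposed node, it has exactly one $M$-covered endpoint, equal to $x$ or $y$; hence every $M$-covered vertex of $P$ other than $x,y$ is an internal vertex of $P$, and so is incident in $P$ to exactly two edges, exactly one of which lies in $M$.

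For the second statement I would assume $c(\ell)c(\ell')\in P$ for some $\ell'$ (so $c(\ell)c(\ell')\notin M$) and, for contradiction, $a:=a(\ell)\in P$, omitting dependence on $\ell$ in $N(\ell)$. Since $a\neq u$, $a\equiv g(\ell'')$ for the literal $\ell''$ preceding $\ell$ in its clause, so by construction $af(\ell'')\in M$ and $ab,ac\notin M$; by the fact above $a$ is internal, so its $P$-edge other than $af(\ell'')$ is $ab$ or $ac$. If it is $ac$, then $c$ is incident in $P$ to the two non-matching edges $ac$ and $cc(\ell')$, contradicting $M$-alternation at the internal vertex $c$; so $ab\in P$ and $b$ is internal. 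Now $M$ restricted to $N(\ell)$ is either $\mathcal{T}(N(\ell))=\{ce,bd\}$ or $\mathcal{F}(N(\ell))=\{cd,be\}$. In the first case $bd$ is the matching $P$-edge at $b$, so $bd\in P$, and since $d\sim b,c$ only, the other $P$-edge at the internal vertex $d$ is $dc\notin M$, so $c$ again carries the two non-matching $P$-edges $dc$ and $cc(\ell')$ — the same contradiction. In the second case $cd$ is the matching $P$-edge at the internal vertex $c$, so $cd\in P$, forcing $dc\in P$ and then $db\in P$ (again since $d\sim b,c$); but $b$ already carries $ab$ together with its matching edge $be$, so $db$ would be a third edge at $b$ — impossible. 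In all cases we reach a contradiction, so $a(\ell)\notin P$.

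The step I expect to be the main obstacle is the structural fact of the second paragraph: making precise that a malicious path has exactly one $M$-covered endpoint and that it sits at $x$ or $y$ is exactly what licenses treating $a(\ell),b(\ell),c(\ell),d(\ell)$ as internal vertices and hence running the clean degree count. Everything after that is a short, mechanical case split on whether $N(\ell)$ contributes $\mathcal{T}(N(\ell))$ or $\mathcal{F}(N(\ell))$ to $M$, strictly parallel to the proof of Claim~\ref{in positive literal f not in P}.
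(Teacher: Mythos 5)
Your proof is correct and takes essentially the same route as the paper's: the same case split on whether $\mathcal{T}(N(\ell))$ or $\mathcal{F}(N(\ell))$ lies in $M$, followed by the same propagation of forced path edges around $b(\ell),d(\ell),c(\ell),e(\ell)$ (using that $d(\ell)$ has degree $2$) until alternation or the path property fails at $b(\ell)$ or $c(\ell)$. The only difference is that you spell out the structural fact that a malicious path's matched endpoint is $x$ or $y$, so the gadget vertices are internal -- a point the paper's proof uses implicitly.
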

\begin{proof}
Assume there is $\ell'$ such that $c(\ell)c(\ell') \in P$. Again, we omit explicit dependency on $\ell$ in nodes, and the first part of the statement is immediate. If $a\neq u$, then $\Gamma(a)=\{f(\ell''),b,c\}$ for some $\ell''$. We have two possibilities. First, if $\ell$ is true, then $ce, bd \in M$. This implies that $P$ or its inverse contains the subpath $c(\ell'),c,e$. Hence if $a \in P$, then $ac$ is not an edge of $P$. Since $af(\ell'') \in M$ for some $\ell''$ by construction and $P$ is $M$-alternating, both $ab$ and $bd \in M$ are edges of $P$. But then we must have that $cd$ is an edge of $P$, a contradiction to $P$ being a path. Second, if $\ell$ is false, then $cd, be \in M$. Hence, either $P$ or its inverse contains the subpath $c(\ell')c,d,b,e$. Thus, $ac, ab$ are not edges of $P$, which implies that $a \notin P$.
\end{proof}

In order to conclude that condition $(iii)$ of Theorem~\ref{thr:characterize-popular} is satisfied, suppose, by contradiction, there exists a malicious path $P$ starting from $s$. Then $P=s,t,u,\dots$. Claim~\ref{claim: no malicious path in a clause} implies that there is at least one consistency edge that belongs to $P$. Take the first consistency edge $c(\ell)c(\ell')$ that $P$ traverses, and suppose that $c(\ell)$ is traversed by $P$ before $c(\ell')$. This means that $a(\ell)$ and $z$ are two consecutive nodes of $P$, traversed in this order and before $c(\ell)$, for some $z \in V(\ell)$ (possibly $z=c(\ell)$). Suppose first $\ell$ is a positive literal. Hence, $a(\ell),b(\ell) \in P$. By Claim~\ref{cl:each clause contains true literal}, $\ell$ is false, hence $\ell'$ is true. Using again Claim~\ref{cl:each clause contains true literal}, we deduce $f(\ell')\notin P$. Hence $P$, in order to reach $wx$, must traverse $a(\ell')$ after $c(\ell')$. If $a(\ell')=u$, then $P$ cannot traverse $u$ again. Hence $a(\ell')\neq u$ and, by Claim~\ref{claim: a(ell) is not in P}, $a(\ell') \notin P$. In both cases, $P$ cannot reach $wx$, a contradiction. If conversely $\ell$ is a negative literal, by Claim~\ref{claim: a(ell) is not in P} $a=u(\ell)$. If $\ell$ is set to true, then by construction $cd, be \in M$. Hence, $P$ contains the subpath $u,b,c,d,e$ or the subpath $u,c,e$ (we omitted dependency on $\ell$). In both cases, $c(\ell)c(\ell')$ cannot be an edge of $P$, a contradiction. If conversely $\ell$ is set to false, then $P$ contains the subpath $u,b,e$ or $u,c,d$ (again, we omitted the dependency on $\ell$). In the latter case, $cc(\ell')$ cannot be an edge of $P$. In the former, the inverse of $P$ must contain the subpath $c(\ell'),c,d,b$, a contradiction ($b$ would have degree $3$ in $P$).


Suppose now there exists a malicious path $P$ starting from $h(\ell'')$. Since $tu \in M$, $u \notin P$. From Claim~\ref{no malicious path from h in a clause}, $P$ must traverse (in this order) a consistency edge $c(\ell')c(\ell)$ with $\ell'$ being a false literal. Pick the first such edge. By Claim~\ref{in positive literal f not in P}, $f(\ell) \notin P$. Hence, after traversing $c(\ell)$, $P$ visits only gadgets corresponding to literals preceding $\ell$ in the clause, until it traverses (in this order) another consistency edge $c(\ell''')c(\ell'''')$, since as we argued above, $u \notin P$. This contradicts Claim~\ref{in positive literal f not in P}, since it must also be that $f(\ell''') \in P$.



Since $s$ and the $h(\ell)$ with $\ell$ being any negative literal are the only unmatched vertices, we conclude that the condition $(iii)$ is also satisfied, and $M$ is popular.

\section{Proof of Theorem~\ref{thr: 3SAT-popular}}\label{sec:main-results}

\begin{proof}
Suppose first $U^{in}=F^{in}=F^{out} = \emptyset$. Then the problem boils down to deciding if there exists a popular matching that does not contain a given set of nodes. By Lemma~\ref{lem:node-containement}, it is enough to check if this is true for a(ny) stable matchings in $(G,<)$. It is well-known that this can be done in polynomial time, see e.g.~\cite{GI}.

Suppose now $U^{out}=F^{in}=F^{out} = \emptyset$. Then, again by Lemma~\ref{lem:node-containement}, the problem corresponds to deciding if there exists a dominant matching of $(G,<)$ that contains a given set of nodes. It is shown in~\cite{The popular edge problem} how to efficiently construct an instance $(G',<')$ and a surjective map $\sigma$ from stable matchings of $(G',<')$ to dominant matchings of $(G,<)$, with the additional property that a node $v \in V(G)$ is in a dominant matching $\sigma(S)$ of $(G,<)$ if and only if one of certain nodes $v', v''$ are in the stable matching $S$ of $(G',<')$ (see~\cite{The popular edge problem} for details). Since, as mentioned above, whether a node belong to a(ny) stable matching can be checked efficiently, our problem can be solved in polynomial time.

Suppose now $U^{out}=U^{in}=F^{out} = \emptyset$, and $F^{in}=\{e\}$. This is the \emph{popular edge problem}, that is shown in~\cite{The popular edge problem} to be solvable in polynomial time. We call the case when $U^{out}=U^{in}=F^{in} = \emptyset$, and $F^{out}=\{e\}$ the \emph{impopular edge problem}. We show that a solution to the latter follows from the solution to the popular edge problem~\cite{The popular edge problem}. Define $E_s$ and $\bar{E_s}$ as:
	\begin{align*}
	E_s &= \{ e \in E: \exists \text{ stable matching } \M \text{ s.t. } e \in \M\}\\
\bar{E_s} &= \{ e \in E: \exists \text{ stable matching } \M \text{ s.t. } e \notin \M\}
\end{align*}
$E_d$, $\bar{E_d}$ (resp. ${E_p}$, $\bar{E_p}$) are defined similarly, by replacing ``stable'' with ``dominant'' (resp. popular). It is proved in~\cite{The popular edge problem} that $E_p = E_s \cup E_d$. We now argue that $\bar{E_p} = \bar{E_s} \cup \bar{E_d}$.
	
	Consider any $e \in \bar{E_s} \cup \bar{E_d}$. Since there is a stable or dominant matching that does not contain $e$, it follows that  $e \in \bar{E_p}$. We conclude that $\bar{E_p} \supseteq \bar{E_s} \cup \bar{E_d}$. Consider any $e=ij \in \bar{E_p}$. There is a popular matching $P$ s.t. $e \notin P$, and $i$ or $j$ is matched (if $i$ and $j$ are unmatched then $(i,j) = (+,+)$, and $M$ is not popular). Wlog assume $jk \in M$. It follows from $E_p = E_s \cup E_d$ that there exists a stable or dominant matching $M'$ s.t. $jk \in M'$, hence $ij \notin M'$. We conclude that $\bar{E_p} \subseteq \bar{E_s} \cup \bar{E_d}$.
	
	Since $\bar{E_p} = \bar{E_s} \cup \bar{E_d}$, we can solve the forbidden edge problem by checking if $e \in \bar{E_s}$ or $e \in \bar{E_d}$. For stable matchings, this can be done in polynomial time, see e.g.~\cite{GI}. For dominant matchings, this can be done thanks to the mapping $\sigma$ define above, see~\cite{The popular edge problem} for details.

\smallskip

We now move to NP-Complete cases. Testing whether a matching is popular can be performed in polynomial time, see \cite{Biro,Popular matching in the stable marriage problem}. Hence all those problem are in NP. Note that, for a popular matching $M$ of $G(\psi)$, the following are equivalent: $st,wx \notin M$; $tu,xy \in M$;
$s \notin V(M)$, $y \in V(M)$;
$s \notin V(M)$, $yx \in M$;
$y \in V(M)$, $st \notin M$; $y \in V(M)$, $tu \in M$. Hence, Lemma~\ref{M contains (t,u), apex is matched to its first choice} settles all NP-Complete cases with $|U^{in}\cup U^{out} \cup F^{in} \cup F^{out}| = 2$. To show NP-Completeness when sets $U^{out}$ and/or $F^{out}$ have bigger size, we can add nodes $h(\ell)$ to $U^{out}$ and/or evicted edges to $F^{out}$, since we know from Lemma~\ref{M contains (t,u), apex is matched to its first choice} that those nodes and edges are never in a popular matching that contains $tu$. If $U^{in}$ and/or $F^{in}$ have bigger sizes, we can repeatedly add nodes $z,k$, with $z$ adjacent to $k$ and $k$ adjacent to $g(\ell)$ for some literal $\ell$ such that $z,k$ are each other's favorite partner. Then clearly $zk$ belongs to any popular matching, so it can be added to $F^{in}$, and $z$ to $U^{in}$.
\end{proof}

\section{Consequences of Theorem~\ref{thr: 3SAT-popular}}\label{sec:last}

Lemma~\ref{lem: 3SAT-popular} implies the NP-Hardness of a number of related problems. The first and more important are the hardness of \texttt{mwp} and \texttt{miwp}. We also discuss here how to obtain the $1/2$-approximated algorithm for \texttt{mwp}.

\medskip

\noindent \emph{Proof of Theorem~\ref{thr:mwm}.} Consider the graph $G(\psi)$ from Lemma~\ref{lem: 3SAT-popular}, and give weight $1$ to edges $ut, xy$, and $0$ to all other edges. Suppose we want to solve \texttt{mwp}. Because of Lemma~\ref{lem: 3SAT-popular}, it is NP-Complete to decide whether $G(\psi)$ has a popular matching of weight $2$. In order to prove the positive result, consider the following claim.

\begin{claim}
	Let $M_{st}^*$ and $M_{dom}^*$ be the maximum weight stable and dominant matchings of $(G,\psi)$ respectively. Then $w(M) \leq 2 W$ for any popular matching $M$, where $W = \max(w(M_{st}^*), w(M_{dom}^*))$.
\end{claim}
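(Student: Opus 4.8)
The plan is to exploit the decomposition theorem from~\cite{The popular edge problem}, which states that every popular matching can be obtained by combining a stable matching and a dominant matching in a structured way; more precisely, we will use the fact that for any popular matching $M$ there exist a stable matching $S$ and a dominant matching $D$ such that $M \subseteq S \cup D$ (this is the content of the decomposition result, and it is also what underlies the identities $E_p = E_s \cup E_d$ and $\bar E_p = \bar E_s \cup \bar E_d$ used in Section~\ref{sec:main-results}). Granting this, the argument is short.

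First I would fix an arbitrary popular matching $M$ and invoke the decomposition to get a stable matching $S$ and a dominant matching $D$ with $M \subseteq S \cup D$. Since $S$ and $D$ are both matchings, every edge of $M$ lies in $S$ or in $D$ (or both), so writing $M_1 := M \cap S$ and $M_2 := M \cap D$ we have $M = M_1 \cup M_2$ and hence, because $w$ is nonnegative, $w(M) \le w(M_1) + w(M_2) \le w(S) + w(D)$. Now $w(S) \le w(M_{st}^*)$ since $S$ is a stable matching and $M_{st}^*$ is the maximum-weight one; likewise $w(D) \le w(M_{dom}^*)$. Therefore $w(M) \le w(M_{st}^*) + w(M_{dom}^*) \le 2\max(w(M_{st}^*), w(M_{dom}^*)) = 2W$, which is exactly the claim. (The splitting $w(M) \le w(M_1)+w(M_2)$ is where nonnegativity of the weights is essential; without it the bound fails.)

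The main obstacle is making precise and citing correctly the decomposition statement ``$M \subseteq S \cup D$ for some stable $S$ and dominant $D$'' — this is the one ingredient not proved in the excerpt, so I would state it explicitly as a cited lemma from~\cite{The popular edge problem} before using it, and double-check that the cited result gives containment of the \emph{edge set} of $M$ in the union of the edge sets of a stable and a dominant matching (as opposed to merely an edge-wise statement). Everything else is routine: nonnegativity of $w$ gives subadditivity over the two-part cover, and optimality of $M_{st}^*$ and $M_{dom}^*$ within their respective classes gives the two termwise bounds. Once the claim is established, combining it with the NP-completeness of deciding whether $G(\psi)$ has a popular matching of weight $2$ (from Lemma~\ref{lem: 3SAT-popular}) yields both the hardness and the tight $1/2$-inapproximability, while the claim itself, together with the polynomial-time computability of a maximum-weight stable matching and of a maximum-weight dominant matching (the latter via the map $\sigma$), gives the matching $1/2$-approximation algorithm for \texttt{mwp}.
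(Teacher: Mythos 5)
Your proof is correct and follows essentially the same route as the paper: the containment $M \subseteq S \cup D$ you require is exactly what the cited decomposition from~\cite{The popular edge problem} provides, since $M$ partitions into $M_0$ and $M_1$ with $S = M_0 \cup M_1'$ stable and $D = M_0' \cup M_1$ dominant. The only cosmetic difference is that the paper argues by contradiction, taking the heavier of $w(M_0), w(M_1)$ and extending it, whereas you bound $w(M) \le w(S) + w(D) \le 2W$ directly; both uses of nonnegativity are the same.
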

\begin{proof}
	Assume by contradiction that there exists a popular matching $M$ such that $w(M) > 2W$. It was shown in~\cite{The popular edge problem} that any popular matching $M$ can be partitioned into a stable matching $M_0$ restricted to a subgraph $G'(\psi)$ and a dominant matching $M_1 = M \setminus M_0$ restricted to $G(\psi)\setminus G'(\psi)$, and there always exist matchings $M_0'$ and $M_1'$ such that $M_{st}'=M_0 \cup M_1'$ is stable and $M_{dom}'=M_0' \cup M_1$ is dominant in $G(\psi)$. First assume $w(M_0) \geq w(M_1)$. This implies $w(M_0) = \frac{1}{2} (w(M_0)+w(M_0)) \geq \frac{1}{2} (w(M_0)+w(M_1)) = \frac{1}{2} w(M)$. Consider $M_{st}'=M_0 \cup M_1'$,  $w(M_{st}')=w(M_0) + w(M_1')$. Since weights are non-negative, $w(M_{st}') \geq w(M_0) \geq \frac{1}{2} w(M)$. We conclude that $W \geq w(M_{st}') \geq \frac{1}{2} w(M)$. This contradicts to the assumption that $w(M) > 2W$. If $w(M_1) \geq w(M_0)$, apply the same argument to $M_{dom}'$.
\end{proof} 

The $1/2$-approximation immediately follows from the previous claim, and the fact that a dominant and stable matching of maximum weight can be computed in polynomial time, see again~\cite{The popular edge problem}.

Now change weights so that $w_e=1$ if $e \in \{st,wx\}$, and $0$ otherwise, and suppose we want to solve \texttt{miwp} on the given instance. We deduce from Lemma~\ref{lem: 3SAT-popular} that it is NP-Complete to decide if the optimal solution is $0$. Hence, the problem cannot be approximated to any factor, unless P=NP. 
\hfill $\square$ 

\medskip

Next corollary deals with weighted popular matching problems where weights are given on the \emph{nodes} instead of the edges. 

\begin{corollary} The following problems: \emph{{\bf Given}: a preference system $(G,<)$ with weights $w$ on nodes {\bf Find}: a popular matching $M$ that maximizes (resp. minimizes) $w(M)$} 
	are NP-Hard and not approximable to any factor in polynomial time, unless P=NP. Conversely, if $w \geq 0$, they can be solved in polynomial time. 
\end{corollary}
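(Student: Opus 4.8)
The plan is to treat the tractable case and the hard case separately, reusing results already established. For the case $w\ge 0$, I would reduce each problem to computing a single, well-understood matching. By Lemma~\ref{lem:node-containement}, every popular matching $M$ of $(G,<)$ satisfies $V(S)\subseteq V(M)\subseteq V(D)$ for $S$ a stable matching and $D$ a dominant matching. Moreover, as noted in the proof of that lemma, every minimum-size popular matching (in particular every stable matching) covers exactly the set $V_{\min}:=V(S)$; symmetrically, since any two popular matchings $M,M'$ tie in the $\phi$-count (both $\phi(M,M')\ge\phi(M',M)$ and $\phi(M',M)\ge\phi(M,M')$ hold), a dominant matching is exactly a maximum-size popular matching, and the same argument shows every dominant matching covers exactly $V_{\max}:=V(D)$. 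Hence, when $w\ge 0$, for every popular $M$ we have $w(V_{\min})\le w(V(M))\le w(V_{\max})$, the lower (resp. upper) bound being attained by any stable (resp. dominant) matching. Since a stable matching can be computed in polynomial time (see e.g.~\cite{GI}) and a dominant matching can be computed in polynomial time~\cite{The popular edge problem}, both the minimization and the maximization problem are solved.

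For the hardness claims, the plan is to reduce from the NP-complete problem of Lemma~\ref{lem: 3SAT-popular}. Given a monotone 3-SAT instance $\psi$, build $(G(\psi),<)$ as in Section~\ref{sec:reduction}. Recall from the proof of Theorem~\ref{thr: 3SAT-popular} that, for a popular matching $M$ of $G(\psi)$, the conjunction ``$s\notin V(M)$ and $y\in V(M)$'' is equivalent to ``$st,wx\notin M$''; hence, by Lemma~\ref{lem: 3SAT-popular}, some popular matching satisfies this conjunction if and only if $\psi$ is satisfiable. For the maximization problem, set $w(y)=1$, $w(s)=-1$, and $w(v)=0$ for every other node. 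A popular matching with $s\notin V(M)$, $y\in V(M)$ then has weight $1$, which is the largest weight any matching can have under $w$; every other popular matching has weight in $\{0,-1\}$; and a popular matching always exists. Therefore the optimum of the maximization problem equals $1$ if $\psi$ is satisfiable and is at most $0$ otherwise. For the minimization problem, symmetrically set $w(y)=-1$, $w(s)=1$, and $w(v)=0$ otherwise: the optimum equals $-1$ if $\psi$ is satisfiable and is at least $0$ otherwise. Both reductions are polynomial by Lemma~\ref{lem: 3SAT-popular}, so each of the two problems is NP-hard.

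The inapproximability then follows in the standard way. Suppose a polynomial-time algorithm always returned a popular matching whose weight is within a multiplicative factor $\rho$ of the optimum ($0<\rho\le 1$ for maximization, $\rho\ge 1$ for minimization). On the maximization instance above, if $\psi$ is satisfiable the optimum is $1$, so the returned matching has weight $\ge\rho>0$, whereas if $\psi$ is unsatisfiable both the optimum and the returned weight are $\le 0$; thus the sign of the returned weight decides satisfiability, which is impossible unless P$=$NP. The minimization instance is symmetric: a satisfiable $\psi$ forces the returned weight to be $\le-\rho<0$, an unsatisfiable one forces it to be $\ge 0$. I expect the only point requiring care to be the choice of equal-magnitude weights on $s$ and $y$: this guarantees that every popular matching \emph{failing} the combined condition is automatically pushed to the ``wrong side'' of $0$, so that no information about the internal structure of stable matchings of $G(\psi)$ is needed, and the rest is bookkeeping.
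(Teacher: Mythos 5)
Your proof is correct and follows essentially the same route as the paper: the same instance $G(\psi)$ with weights $-1$ on $s$, $+1$ on $y$ and $0$ elsewhere for hardness and inapproximability, and Lemma~\ref{lem:node-containement} (stable minimizes, dominant maximizes coverage) for the nonnegative case. The only cosmetic difference is that you handle minimization by negating the weights explicitly, whereas the paper simply notes that maximization and minimization are polynomially equivalent for unrestricted weights.
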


\begin{proof}
Note that, if we do not make any assumption on $c$, maximizing and minimizing are polynomially equivalent. So we focus on the maximization. Consider the instance $(G,<)$ as defined in Lemma~\ref{lem: 3SAT-popular}, and $w$ being the vector with $-1$ in the component corresponding to $s$, $1$ in the components corresponding to $y$, and $0$ otherwise. Lemma~\ref{lem: 3SAT-popular} implies that it is NP-Complete to decide if the optimum is strictly greater than $0$. On the other hand, if $w\geq 0$, Lemma~\ref{lem:node-containement} implies that the minimum (resp. maximum) of $w(M)$ with $M$ popular is achieved at any stable (resp. dominant) matching, which can be easily computed. \end{proof}

The next problem can be seen as an extension of \texttt{pmffe} when $F^{out}$ is the only non-empty set. 
\begin{corollary}
The following problem: \emph{{\bf Given} a preference system  $(G,<)$ $F\subseteq E(G)$, and $q \in \N$, {\bf Decide:} if there exists a popular matching $M$ of $(G,<)$ such that $|M\cap F|\leq q$} is NP-Complete.
\end{corollary}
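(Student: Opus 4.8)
The plan is to observe that the problem generalizes the existence question already settled by Lemma~\ref{lem: 3SAT-popular}: the special case $q=0$, $F=\{st,wx\}$ is exactly that lemma's decision problem. So NP-hardness is immediate, and the only additional thing to check is membership in NP.

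For membership in NP, given an instance $((G,<),F,q)$ a popular matching $M$ with $|M\cap F|\le q$ is a polynomial-size certificate. Verifying that $M$ is a matching and that $|M\cap F|\le q$ is trivial, and verifying that $M$ is popular can be done in polynomial time, either via the algorithms of~\cite{Biro,Popular matching in the stable marriage problem} or by directly checking the forbidden substructures of Theorem~\ref{thr:characterize-popular} in $G_M$. Hence the problem lies in NP.

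For NP-hardness, I would reduce from monotone $3$-SAT. Given a monotone $3$-SAT instance $\psi$, build in polynomial time the graph $G(\psi)$ and the preference system $<$ provided by Lemma~\ref{lem: 3SAT-popular}, and output the instance with $F=\{st,wx\}$ and $q=0$. A popular matching $M$ satisfies $|M\cap F|\le 0$ precisely when $M$ contains no edge of $F$; by Lemma~\ref{lem: 3SAT-popular}, such a matching exists if and only if $\psi$ is satisfiable. Since monotone $3$-SAT is NP-hard (as recalled at the beginning of Section~\ref{sec:reduction}), the problem is NP-hard, and together with membership in NP, NP-complete.

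There is essentially no technical obstacle beyond invoking Lemma~\ref{lem: 3SAT-popular}; the only point that deserves a line of justification is that the NP certificate is genuinely polynomial-time checkable, which is exactly the statement that testing popularity of a given matching is in P.
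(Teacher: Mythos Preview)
Your proof is correct. The NP membership argument is exactly as in the paper, and your reduction via $q=0$ and $F=\{st,wx\}$ is a valid polynomial-time reduction from monotone $3$-SAT, invoking Lemma~\ref{lem: 3SAT-popular} directly.

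The paper takes a slightly different route for the hardness part. Instead of fixing $q=0$, it keeps $q$ arbitrary: starting from $(G(\psi),<)$, it attaches $q$ pendant pairs $z,k$ (each other's favorite partner) so that every popular matching must contain all $q$ edges $zk$, and sets $F$ to be these $q$ edges together with $st$ and $wx$. Then $|M\cap F|\le q$ forces $M\cap\{st,wx\}=\emptyset$, and one concludes via Lemma~\ref{lem: 3SAT-popular} as you do. The upshot of the paper's construction is that it shows the problem is NP-hard for \emph{every} fixed value of $q$, not just $q=0$; your argument, while perfectly sufficient for the corollary as stated (where $q$ is part of the input), does not yield this parametric strengthening.
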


\begin{proof}
Clearly the problem is in NP. Consider the preference system $(G,<)$ as defined in Lemma~\ref{lem: 3SAT-popular}. As in the proof of Theorem~\ref{thr: 3SAT-popular}, add $q$ pairs of nodes $k,z$, such that $kz$ is in any popular matching. Let $F$ be the set given by this $q$ pairs, plus $st$, $wx$. Then a popular matching $M$ satisfies $|M\cap F|\leq q$ if and only if $M\cap \{st, wx\}=\emptyset$ if and only if the restricted 3-SAT instance is satisfiable.  \end{proof}

Recall that testing whether a matching is popular can be performed in polynomial time, see e.g. \cite{Biro,Popular matching in the stable marriage problem}. We conclude this section by showing that, on the other hand, deciding whether a subset of nodes can be matched among themselves only as to form a popular matching of the original instance is NP-Complete. We call this the \emph{exclusive popular set problem.} It can also be seen as an extension of \texttt{pmfee} when $U^{in}$ (or $U^{out}$) is the only non-empty set.

\begin{corollary} The following \emph{exclusive popular set problem} (\texttt{eps}): \emph{{\bf Given}: a preference system $(G,<)$ and a set $U\subseteq V(G)$ {\bf Decide}: if there exists a popular matching $M$ of $(G,<)$ with $V(M)=U$} 
	is NP-Complete. \end{corollary}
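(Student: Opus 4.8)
The plan is to obtain a polynomial reduction from the instances produced by Lemma~\ref{lem: 3SAT-popular}. Membership in NP is immediate: given a matching $M$, testing $V(M)=U$ is trivial and testing popularity of $M$ is polynomial (see~\cite{Biro,Popular matching in the stable marriage problem}). For hardness, given a monotone $3$-SAT instance $\psi$, I would build $(G(\psi),<)$ exactly as in Lemma~\ref{lem: 3SAT-popular} and take
$$U := V(G(\psi)) \setminus \bigl(\{s\} \cup \{\, h(\ell) : \ell \text{ a negative literal of } \psi \,\}\bigr),$$
which is computable in polynomial time. It then suffices to show that $(G(\psi),<)$ admits a popular matching $M$ with $V(M)=U$ if and only if $\psi$ is satisfiable.

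For the ``if'' direction, starting from a satisfying assignment I would take the popular matching $M$ built in the subsection ``From feasible assignments to popular matchings''; recall that it avoids $F=\{st,wx\}$. I would then simply read off $V(M)$ from its definition: $M$ contains $tu,vw,xy$ and $f(\ell)g(\ell)$ for every literal $\ell$, and in every gadget it contains either the $\mathcal{T}(\cdot)$ or the $\mathcal{F}(\cdot)$ pair, each of which covers the four inner vertices $b,c,d,e$ of that gadget; moreover every apex is either $u$ or a gateway and hence is covered, and every gateway is covered by an $f(\ell)g(\ell)$ edge. The only vertices left exposed are then $s$ (which has degree one with $st\notin M$) and, for each negative literal $\ell$, the vertex $h(\ell)$ (which has degree one with $e(\ell)h(\ell)\notin M$). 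Hence $V(M)=U$.

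For the ``only if'' direction, suppose $M$ is a popular matching of $(G(\psi),<)$ with $V(M)=U$. Then $s\notin V(M)$, and since $\deg(s)=1$ this forces $st\notin M$; also $y\in V(M)$, and since $\deg(y)=1$ this forces $xy\in M$, hence $wx\notin M$. Thus $M$ is a popular matching avoiding $F$, and Lemma~\ref{lem: 3SAT-popular} gives that $\psi$ is satisfiable; alternatively one may invoke the chain of equivalences for popular matchings of $G(\psi)$ listed in the proof of Theorem~\ref{thr: 3SAT-popular}. The only genuine bookkeeping in this argument is the exact determination of $V(M)$ in the ``if'' direction, which is a direct inspection of the matching defined in Section~\ref{sec:reduction}; once Lemma~\ref{lem: 3SAT-popular} is in hand there is no real obstacle.
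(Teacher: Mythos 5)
Your proposal is correct and follows essentially the same route as the paper: the same hard instance $(G(\psi),<)$ from Lemma~\ref{lem: 3SAT-popular}, the same set $U$ (all vertices except $s$ and the $h(\ell)$'s), and the same use of the degree-one vertices $s$ and $y$ to force $st,wx\notin M$ in the ``only if'' direction. The only (harmless) difference is that for the ``if'' direction you verify $V(M)=U$ by inspecting the explicit matching built from a satisfying assignment, whereas the paper proves the slightly stronger claim that \emph{every} popular matching avoiding $\{st,wx\}$ has $V(M)=U$, via Lemma~\ref{M contains (t,u), apex is matched to its first choice}.
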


\begin{proof}
Clearly the problem is in NP. Consider the instance from Lemma \ref{lem: 3SAT-popular}, and set $U=(\{t,u,v,w,x,y\} \cup_\ell V(\ell))\setminus \cup_\ell h(\ell)$, where the union ranges over all literals. Let $M$ be a popular matching of $(G,<)$. We claim that $V(M)=U$ if and only if $st, wx \notin M$. The thesis then follows by Lemma \ref{lem: 3SAT-popular}. Let $V(M)=U$. Then by definition $st \notin M$, and $wx \notin M$, otherwise $y$ is unmatched. Now suppose $st, wx \notin M$. Then $tu \in M$. By Lemma \ref{M contains (t,u), apex is matched to its first choice}, we know that for each literal $\ell$, $g(\ell)f(\ell) \in M$, $h(\ell)$ is $M$-exposed, and $c(\ell)c(\ell')\notin M$ for all $\ell'$. Simple arguments (see e.g. the proof of Lemma \ref{M contains (t,u), apex is matched to its first choice}, part 3), imply that, for a fixed $\ell$, nodes $b(\ell),c(\ell),d(\ell),e(\ell)$ must be all matched, and all among themselves. This implies $V(M)=U$, as required. \end{proof}


\begin{thebibliography}{9}
	
\bibitem{Biro}

P. Bir{\'{o}}, R. W. Irving, and D. Manlove. \emph{Popular Matchings in the Marriage and Roommates Problems}. In Proceedings \emph{Algorithms and Complexity, 7th International Conference}, pp. 97--108, 2010.


	
\bibitem{3sat monotone}
A.K. B{\"u}ning.
\textit{A Collection of Problems in Propositional Logic}. 
SS, 2016


\bibitem{Cseh on popular matchings} A. Cseh. \emph{Popular matchings}. In \emph{Trends in computational social choice}. AI Access, 105~-~122, 2017.

\bibitem{The popular edge problem}
A. Cseh, T. Kavitha.
\textit{Popular edges and dominant matchings}. 
Conference
on Integer Programming and Combinatorial Optimization (IPCO): 138~-~151, 2016

\bibitem{GaSha} D. Gale and L. S. Shapley. \emph{College Admissions and the Stability of Marriage}. American Mathematical Monthly. 69: 9–-14, 1962.


\bibitem{GI} D. Gusfield and R. W. Irving. \emph{The Stable marriage problem - structure and algorithms}. {MIT} Press, 1989.

\bibitem{On the structure of
	popular matchings in the stable marriage problem}
M. Hirakawa, Y. Yamauchi, S. Kijima, and M. Yamashita. \textit{On the structure of popular matchings in the stable marriage problem - Who can join a popular matching?}. In Proceedings of the 3rd International Workshop on Matching Under Preferences (MATCH-UP), 2015.

\bibitem{Popular matching in the stable marriage problem}
C.-C. Huang and T. Kavitha. \textit{Popular matchings in the stable marriage problem}. Information and Computation, 222:180~-~194, 2013.

\bibitem{Popularity mixed matchings and self-duality} C.-C. Huang and T. Kavitha. \emph{Popularity, mixed matchings, and self-duality}. In Proceedings of the 28th Annual ACM-SIAM Symposium on Discrete Algorithms
(SODA), pages 2294~-~2310. SIAM, 2017.


\bibitem{A size-popularity tradeoff in the stable marriage problem}
T. Kavitha. \textit{A size-popularity tradeoff in the stable marriage problem}. SIAM Journal
on Computing, 43:52~-~71, 2014.


\bibitem{Popular half-integral matchings} T. Kavitha. \emph{Popular half-integral matchings}. In Proceedings of the 43rd International Colloquium on Automata, Languages, and Programming (ICALP), volume 55, pages 22:1~-~22:13, Dagstuhl, Germany, 2016.

\bibitem{KleTa}  J. Kleinberg and E. Tardos. \emph{Balanced outcomes in social exchange networks}. In Symposium on Theory of Computing (STOC), pages 295–-304, 2008.


\bibitem{Lov Plum} L. Lovasz and M. Plummer. \emph{Matching Theory}. AMS Chelsea publishing, 2009.

\bibitem{Manlove} D. F. Manlove. \emph{Algorithmics of Matching Under Preferences}. World Scientific Publishing Company, 2004.






\end{thebibliography}
\end{document}